\newtheorem{theorem}{Theorem}[section]
\newtheorem{lemma}[theorem]{Lemma}
\newtheorem{remark}[theorem]{Remark}
\newtheorem{example}[theorem]{Example}
\newtheorem{assumption}[theorem]{Assumption}
\begin{document}

\title{On utility maximization without passing by the dual problem\thanks{The
author thanks Freddy Delbaen and Keita Owari for discussions about Section \ref{or} and an anonymous
referee for very useful comments that led to substantial improvements. Special thanks go to Ngoc Huy Chau for 
discussions which helped discovering and removing an error. The support received from the ``Lend\"ulet'' grant LP 2015-6 of the
Hungarian Academy of Sciences and from the NKFIH (National Research, Development and Innovation Office, Hungary) 
grant KH 126505 is gratefully acknowledged.}}

\author{Mikl\'os R\'asonyi}

\date{\today}

\maketitle

\begin{abstract}
We treat utility maximization from terminal wealth 
for an agent with utility function $U:\mathbb{R}\to\mathbb{R}$ who dynamically invests
in a continuous-time financial market and receives a 
possibly unbounded random endowment. We prove the existence of an optimal investment
without introducing the associated dual problem. We rely on a recent result of Orlicz space theory, due to Delbaen
and Owari which leads to a simple and transparent proof.

Our results apply to non-smooth utilities and even strict concavity can
be relaxed. We can handle certain random endowments with non-hedgeable risks, complementing earlier papers.
Constraints on the terminal wealth can also be incorporated.

As examples, we treat frictionless markets with finitely many assets and large financial markets.  
\end{abstract}

\section{Prologue}

Utility maximization from terminal wealth with a random endowment is known to be delicate 
as complications for the dual problem arise. This was first noticed in \cite{csw}. 
Here we propose a method to prove the existence of
maximizers working on the primal problem only, for utility functions $U$ that are finite on the whole real line. 
This method allows the treatment of random endowments without tackling the dual problem. Constraints
on the terminal wealth can also be easily incorporated. 
The proofs are transparent and rather straightforward.
We utilize a Koml\'os-type compactness result of \cite{delbaen-owari}, see
Lemma \ref{do} below. 

When $U$ is defined on $(0,\infty)$, a direct approach to the primal problem of utility maximization
is well-known from \cite{wa}, and it has already been exploited in markets with constraints (see \cite{lz}) 
or with frictions (see \cite{pao,3c}). For $U$ with domain $\mathbb{R}$ our method seems the first to avoid
solving the dual problem. The conjugate function of $U$ does appear also in our approach, we use
Fenchel's inequality and some Orlicz space theory but the dual problem does not even need to be
defined. 

After reviewing facts of Orlicz space theory in Section \ref{or},
we formulate Theorems \ref{main1}, \ref{main1.5}, \ref{main2}, \ref{const} 
and \ref{constt} in Section \ref{ut} 
in a general setting, without reference to specific types of market models. Then we demonstrate the power of our
method by considering frictionless markets with finitely many assets (Section \ref{fric})
and large financial markets (Section \ref{lar}).

Besides displaying a new, simple method, our paper makes several contributions improving on the
existing literature. We are listing them now.

\begin{itemize}

\item  In frictionless markets, Theorems \ref{mai1} and \ref{mai2} allow unbounded, possibly non-hedgeable random endowments, see Remark \ref{burt} and Example \ref{unbi} for details. The asset prices need not be locally bounded.
We do not require smoothness of $U$ and strict concavity is not imposed either.  
In particular, we provide minimizers for loss functionals, see Example \ref{lil}. Constraints of a very general
type on the terminal portfolio value are admitted in Theorem \ref{conss}. 

\item In the theory of large financial markets, our approach is the first to tackle utility maximization for $U$
finite on $\mathbb{R}$ (the case of $U$ defined on $(0,\infty)$ was first considered in \cite{paolo}; 
subsequently \cite{mostovyi} treated random endowments in the same setting), see Section \ref{lar}.

\end{itemize}

\section{About Orlicz spaces}\label{or}

Orlicz spaces as the appropriate framework for utility maximization have already been
advocated in \cite{sara_orlicz,sb,sara}. These spaces play a crucial role in our approach, too.

We write $x_+$ (resp. $x_-$) to denote the positive (resp. negative) part of some $x\in\mathbb{R}$. 
Fix a probability space $(\Omega,\mathcal{F},P)$. We identify random variables differing on a $P$-zero set only.
We denote by $L^0$ the set of all $\mathbb{R}$-valued random variables. The family of non-negative elements 
in $L^0$ is denoted by $L^0_+$. The symbol $EX$ denotes
the expectation of $X\in L^0$ whenever this is well-defined (i.e. either $EX_+$ or $EX_-$ is finite). 
If $Q$ is another probability on $\mathcal{F}$ then
the $Q$-expectation of $X$ is denoted by $E_Q X$. Let $L^1(Q)$ denote the usual Banach space
of $Q$-integrable random variables on $(\Omega,\mathcal{F},Q)$ for some probability $Q$. When $P=Q$
we simply write $L^1$. A reference work for
the results mentioned in the discussion below is \cite{rr}.

In this paper, we call $\Phi:\mathbb{R}_+\to\mathbb{R}_+$ a \emph{Young function} if it is convex with $\Phi(0)=0$ and
$\lim_{x\to\infty}\Phi(x)/x=\infty$. The set 
\[
L^{\Phi}:=\{X\in L^0:\ E\Phi(\gamma |X|)<\infty\mbox{ for some }\gamma>0\}
\]
becomes a Banach space (called the Orlicz space corresponding to $\Phi$) with the norm 
\[
\Vert X\Vert_{\Phi}:=\inf\{\gamma>0:X\in \gamma B_{\Phi}\},
\]
where $B_{\Phi}:=\{X\in L^0:\ E\Phi(|X|)\leq 1\}$. Define the conjugate function $\Phi^*(y):=\sup_{x\geq 0}[xy-\Phi(x)]$,
$y\in\mathbb{R}_+$.
This is also a Young function and we have 
\[
(\Phi^{*})^*=\Phi.
\]

We say that $\Phi$ is \emph{of class} $\Delta_2$ if 
\[
\limsup_{x\to\infty}\frac{\Phi(2x)}{\Phi(x)}<\infty.
\]
In this case also 
\[
h_{\Phi}:=\sup_{x\geq 1}\frac{\Phi(2x)}{\Phi(x)}<\infty.
\]

\begin{remark}\label{amu} {\rm Let $\Phi$ be a Young function. Then $E\Phi(|X|)<\infty$
implies $\Vert X\Vert_{\Phi}<\infty$ and the two conditions are equivalent when $\Phi$
is of class $\Delta_2$. These well-known observations play an important role in our arguments so
we provide their proofs here, for convenience.
  
Assume first that $E\Phi(|X|)<\infty$.
Convexity of $\Phi$ and $\Phi(0)=0$ imply $\Phi(x/m)\leq \Phi(x)/m$ for all $m\geq 1$ and $x\geq 0$. Take $X$ with $E\Phi(|X|)=:M<\infty$. Then 
$E\Phi(|X|/(M+1))< 1$ hence, by definition, 
$\Vert X\Vert_{\Phi}\leq M+1<\infty$. 

Looking at the converse direction: if $\Phi$ is a Young function of class
$\Delta_2$ then for any $X\in L^0$, $\Vert X\Vert_{\Phi}<2^k$ implies 
\begin{equation}\label{ree}
E\Phi(|X|)\leq h_{\Phi}^k E\Phi(|X|/2^k) +\Phi(2^k)\leq h_{\Phi}^k+\Phi(2^k),
\end{equation} 
for all integers $k\geq 1$. It follows that if $\Vert X\Vert_{\Phi}=:M'<\infty$
then, for $k$ large enough, $2^k>M'$ hence also $E\Phi(|X|)<\infty$, by \eqref{ree}.}
\end{remark}

Let us recall a compactness result of \cite{delbaen-owari} which is crucial for the developments 
of the present paper.

\begin{lemma}\label{do} Let $\Phi$ be a Young function of class $\Delta_2$ 
and let $\xi_n$, $n\geq 1$ be a norm-bounded sequence in $L^{\Phi^*}$.
Then there are convex weights $\alpha_j^n\geq 0$, $n\leq j\leq M(n)$,
$\sum_{j=n}^{M(n)} \alpha_j^n=1$ such that 
\[
\xi_n':=\sum_{j=n}^{M(n)}\alpha_j^n \xi_j
\]
converges almost surely to some $\xi\in L^{\Phi^*}$ and $\sup_n |\xi_n'|$ is in $L^{\Phi^*}$.
\end{lemma}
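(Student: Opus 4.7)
The strategy I would try is to combine the classical Komlós theorem with the dual structure of Orlicz spaces. First observe that norm-boundedness in $L^{\Phi^*}$ entails $L^1$-boundedness: since $\Phi^*$ is a Young function, $\Phi^*(x)/x\to\infty$, so a standard de la Vallée–Poussin argument gives $\sup_n E|\xi_n|<\infty$. Classical Komlós then yields a subsequence $(\xi_{n_k})$ and a random variable $\xi\in L^0$ such that the Cesàro means of every further subsequence converge a.s. to $\xi$. To upgrade $\xi\in L^1$ to $\xi\in L^{\Phi^*}$, let $M$ be the norm bound, so that $E\Phi^*(|\xi_n|/M)\le 1$ for all $n$; by convexity of $\Phi^*$ the same bound holds for every convex combination of the $\xi_j$'s, and Fatou's lemma applied to any such a.s.\ convergent sequence gives $E\Phi^*(|\xi|/M)\le 1$.

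To obtain the required \emph{tail} convex combinations $\xi'_n=\sum_{j=n}^{M(n)}\alpha_j^n \xi_j$ (rather than Cesàro means along a fixed subsequence), I would proceed by a diagonal/recursion: at stage $n$, apply the Komlós argument to the still norm-bounded tail $(\xi_j)_{j\ge n}$ to extract a convex combination $\xi'_n$ from indices $\ge n$ that, say, agrees with $\xi$ to within $\varepsilon_n$ in some quantitative sense (to be chosen). Because the full sequence is $L^{\Phi^*}$-bounded and $\Phi$ is of class $\Delta_2$, the predual $L^{\Phi}$ is separable and $L^{\Phi^*}=(L^{\Phi})^*$; hence the choice of the weights can also be guided by a weak-$*$ cluster procedure, ensuring the same limit point $\xi$ at every recursion level.

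The main obstacle, and where the work of \cite{delbaen-owari} really enters, is the uniform domination $\sup_n|\xi'_n|\in L^{\Phi^*}$. A.s.\ convergence alone is not enough, and since $\Phi^*$ need not be $\Delta_2$ one cannot pass freely between norm convergence and integral convergence. The natural plan is to arrange the recursion so that $\|\xi'_n-\xi\|_{\Phi^*}$ (or rather $E\Phi^*(c|\xi'_n-\xi|)$ for some fixed $c>0$) decays summably, whence $\sum_n|\xi'_n-\xi|$ lies in $L^{\Phi^*}$ and $\sup_n|\xi'_n|\le |\xi|+\sum_n|\xi'_n-\xi|\in L^{\Phi^*}$. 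Producing such a fast approximation requires a maximal-inequality / truncation argument: for each $n$, split $\xi_j=\xi_j\mathbf{1}_{\{|\xi_j|\le K_n\}}+\xi_j\mathbf{1}_{\{|\xi_j|> K_n\}}$, use uniform control of the large-value parts from $E\Phi^*(|\xi_j|/M)\le 1$, and use weak compactness in $L^\infty$ (via the $\Delta_2$ predual) to extract convex combinations of the truncated pieces that are close to $\xi$. Stitching the truncation levels $K_n\uparrow\infty$ together with the summable-error diagonal, and invoking convexity of $\Phi^*$ one last time, should deliver the dominated-supremum conclusion.
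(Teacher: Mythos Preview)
The paper does not prove this lemma at all: its entire proof is the single sentence ``This is Corollary~3.10 of \cite{delbaen-owari}.'' So you are attempting strictly more than the paper does, and the easy parts of your sketch --- $L^1$-boundedness from the Young-function growth, Koml\'os for a.s.\ convergence along convex combinations, and Fatou plus convexity to place the limit $\xi$ back in $L^{\Phi^*}$ --- are fine and standard.

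The genuine gap is exactly where you locate it, but your proposed fix does not close it. You want to arrange $\sum_n \|\xi_n'-\xi\|_{\Phi^*}<\infty$ (or the modular analogue $\sum_n E\Phi^*(c|\xi_n'-\xi|)<\infty$) and then dominate $\sup_n|\xi_n'|$ by $|\xi|+\sum_n|\xi_n'-\xi|$. But norm-boundedness in $L^{\Phi^*}$ does \emph{not} in general yield convex combinations that converge in the $L^{\Phi^*}$-norm or in the $\Phi^*$-modular; that would amount to a form of weak compactness that fails precisely because $\Phi^*$ need not be $\Delta_2$. Your truncation idea controls the bounded pieces, but the tails $\xi_j\mathbf{1}_{\{|\xi_j|>K_n\}}$ are only small in the sense that $E\Phi^*(|\xi_j|/M)\mathbf{1}_{\{|\xi_j|>K_n\}}$ is small, which says nothing about $E\Phi^*(c\cdot(\text{tail}))$ for a \emph{fixed} $c$ when $\Phi^*$ is not $\Delta_2$; so you cannot stitch the levels into a summable modular error. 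Also, the assertion that $L^{\Phi}$ is separable whenever $\Phi$ is $\Delta_2$ is false without a separability hypothesis on the underlying $\sigma$-algebra, so the weak-$*$ cluster argument you invoke needs repair.

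What actually makes the Delbaen--Owari argument work is that the $\Delta_2$ condition on $\Phi$ translates into a quantitative growth condition on $\Phi^*$ (roughly, $\Phi^*(\lambda x)\geq \lambda^p \Phi^*(x)$ for some $p>1$ and large $x$), which gives a uniform-integrability-type control of the \emph{level sets} of the $\xi_n$'s directly in the $\Phi^*$-modular. This is the mechanism that lets one pass from a.s.\ convergence of convex combinations to domination of the running supremum, and it is not captured by your summable-error scheme. If you want a self-contained proof rather than a citation, that is the piece you need to supply.
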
 
\begin{proof}
This is Corollary 3.10 of \cite{delbaen-owari}.
\end{proof}

\section{A general framework}\label{ut}

In this section no particular market model is fixed. Instead, an abstract framework is
presented where portfolios are represented by their wealth processes which are assumed
to be supermartingales under a certain set of reference probability measures. We deduce
the existence of optimal portfolios in such a setting.

Let $T>0$ be a fixed finite time horizon and let $(\Omega,\mathcal{F},(\mathcal{F}_t)_{t\in [0,T]},P)$ 
be a stochastic basis satisfying the usual hypotheses. 

Our requirements on the utility function are summarized in the following assumptions.

\begin{assumption}\label{u}
The function $U:\mathbb{R}\to\mathbb{R}$ is nondecreasing and concave, $U(0)=0$. 
Define the convex conjugate of $U$ by 
\[
V(y):=\sup_{x\in\mathbb{R}}[U(x)-xy].
\]
We stipulate
\begin{eqnarray}
\lim_{x\to-\infty}\frac{U(x)}{x} &=&\infty,\label{levente}\\
\limsup_{y\to\infty}\frac{V(2y)}{V(y)} &<& \infty.\label{hetente}
\end{eqnarray}
\end{assumption}

\begin{assumption}\label{moder} Let $U:\mathbb{R}\to\mathbb{R}$ be such that
\begin{equation}\label{berente}
\limsup_{x\to -\infty}\frac{U(2x)}{U(x)} < \infty.
\end{equation}
\end{assumption}

In simple terms, a risk-averse investor is considered who prefers more to less. In many of the related studies,
$U$ is also assumed continuously differentiable and strictly concave. For
purposes of e.g. loss minimization, however, strict concavity of $U$ would be
too much to require. 

We remark that \eqref{levente} implies $V(y)>0$ for $y$ large enough
hence \eqref{hetente} makes sense. We also point out that \eqref{hetente} is implied by the standard ``reasonable asymptotic elasticity''
condition, see e.g. Corollary 4.2 of \cite{walter}, hence \eqref{hetente} is rather mild
a hypothesis. However, condition \eqref{berente} is, admittedly,
quite restrictive since it excludes e.g. the exponential utility. 

Define 
\begin{equation}\label{mathp}
\mathcal{P}_V:=\{Q\ll P:\ EV(dQ/dP)<\infty\} 
\end{equation}
and let $\mathcal{M}^a_V\subset\mathcal{P}_V$ be a fixed set of ``reference probabilities''. 
Denote 
$$
\mathcal{M}^e_V:=\{Q\in\mathcal{M}^a_V:\, Q\sim P\}.
$$
Notice that no convexity or closedness assumption is required about the ``set of reference
probabilities'' $\mathcal{M}^a_V$.

We introduce
\begin{align}\nonumber
\mathcal{S} := \{Y_t,\ t\in [0,T] &: \ Y\mbox{ is a c\`adl\`ag }\\
&R\mbox{-supermartingale,}\mbox{ for all }R\in\mathcal{M}^a_V,\ Y_0=0\}.\label{chante}
\end{align}
Clearly, $\mathcal{S}\neq\emptyset$ since the identically zero supermartingale is therein. 
Also, $\mathcal{S}$ is convex. 

We now stipulate our conditions on the random endowment $\mathcal{E}$ that the investor receives.
\begin{assumption}\label{m0} There exists $Q\in \mathcal{M}^e_V$. 
$\mathcal{E}$ is $\mathcal{F}_T$-measurable and, for all $R\in\mathcal{M}^a_V$, 
$E_R|\mathcal{E}|<\infty$.
\end{assumption}

\begin{remark}\label{ludlo} {\rm We provide a simple sufficient condition for Assumption \ref{m0} under
the objective probability $P$. Let Assumption \ref{u} be in force and consider the conditions
\begin{equation}\label{rom2}
EU(-\mathcal{E}_+)>-\infty,\ EU(-\mathcal{E}_-)>-\infty.
\end{equation}
They can be interpreted as ``gains or losses from the random endowment
should not be too large'' as measured by the tail of $U$ at $-\infty$. Notice that, by the 
Fenchel inequality, 
for \emph{any} $R\in \mathcal{M}^a_V$,
\[
E_R\mathcal{E}_{\pm}\leq EV(dR/dP)-EU(-\mathcal{E}_{\pm})<\infty,
\]
by \eqref{rom2}. We conclude that Assumption \ref{m0} holds for every 
$\mathcal{F}_T$-measurable $\mathcal{E}$ satisfying
\eqref{rom2} provided that $\mathcal{M}_V^e\neq\emptyset$.}
\end{remark}

We fix a non-empty convex subset $\mathcal{A}\subset\mathcal{S}$, its elements will correspond to
``admissible'' portfolios, depending on the context.
We imagine that, for each $Y\in\mathcal{A}$, $Y_T$ represents the value at $T$ of an available
investment opportunity (e.g. the terminal wealth of a dynamically rebalanced portfolio in the
given market model). In general, $\{ Y_T:\, Y\in\mathcal{A}\}$ is not
closed in any suitable sense so it is desirable to carry out
utility maximization over a larger class of processes. Such a class is defined now. 
\begin{eqnarray}\nonumber
\mathcal{A}_U &:=& \{Y\in\mathcal{S}:\mbox{ there is }Y^n\in\mathcal{A}
\mbox{ with }U(Y^n_T+\mathcal{E})\in L^1,\ n\geq 1\\
& &\mbox{ and }U(Y^n_T+\mathcal{E})\to U(Y_T+\mathcal{E}),\, n\to\infty,\mbox{ in }L^1\}.\nonumber 
\end{eqnarray}

\begin{remark}{\rm Choosing the domain of optimization is a subtle issue when $U:\mathbb{R}\to\mathbb{R}$.
The above definition follows the choice of \cite{walter}. In that paper (and in many subsequent studies),
$\mathcal{A}$ is the set of portfolio value processes that are bounded from below
(these all lie in $\mathcal{S}$) and the domain of optimization is its ``closure'' $\mathcal{A}_U$.}
\end{remark}



\begin{theorem}\label{main1} Let Assumptions \ref{u}, \ref{moder} and \ref{m0} be in force and let
$U$ be bounded above. Then there exists
$Y^{\dagger}\in\mathcal{A}_U$ such that
\[
EU(Y^{\dagger}_T+\mathcal{E})=\sup_{Y\in\mathcal{A}_U}EU(Y_T+\mathcal{E}),
\]
provided that $\mathcal{A}_U\neq \emptyset$.
\end{theorem}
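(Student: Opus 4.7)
My plan is to extract a maximizing sequence of terminal wealths, apply Lemma~\ref{do} and then a classical $L^1$-Koml\'os step to obtain an almost sure limit of convex combinations of those terminal wealths, and finally build a process $Y^\dagger\in\mathcal{S}$ having this limit as its terminal value.

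Set $\alpha:=\sup_{Y\in\mathcal{A}_U}EU(Y_T+\mathcal{E})$; since $\mathcal{A}_U\neq\emptyset$, $\alpha>-\infty$, and $\alpha\le K:=\sup U<\infty$. The $L^1$-density built into the definition of $\mathcal{A}_U$ lets me choose a maximizing sequence $(Y^n)\subset\mathcal{A}$ with $U(Y^n_T+\mathcal{E})\in L^1$. Introduce the Young function $\Phi(x):=-U(-x)$, $x\ge 0$: it is convex with $\Phi(0)=0$, superlinear by \eqref{levente}, and of class $\Delta_2$ by Assumption~\ref{moder}. Splitting $U$ on $\{Y^n_T+\mathcal{E}\ge 0\}$ versus its complement gives
\[
EU(Y^n_T+\mathcal{E})\le K-E\Phi\bigl((Y^n_T+\mathcal{E})_-\bigr),
\]
so $E\Phi((Y^n_T+\mathcal{E})_-)$ is bounded and, by Remark~\ref{amu}, $(Y^n_T+\mathcal{E})_-$ is norm-bounded in $L^\Phi$.

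To invoke Lemma~\ref{do} I first check that $\Phi^*$ is $\Delta_2$: since $\Phi^*(y)=\sup_{z\le 0}[U(z)-zy]$ and the $z>0$-contribution to $V(y)$ is dominated by $K$, one has $\Phi^*(y)\le V(y)\le\Phi^*(y)+K$, so \eqref{hetente} transfers from $V$ to $\Phi^*$. Lemma~\ref{do}, applied with its ``$\Phi$'' equal to my $\Phi^*$, then produces convex weights $\alpha_j^n$ with $\xi'_n:=\sum_j\alpha_j^n(Y^j_T+\mathcal{E})_-$ converging a.s.\ and $H:=\sup_n\xi'_n\in L^\Phi$. Set $\tilde Y^n:=\sum_j\alpha_j^n Y^j\in\mathcal{A}$ and $\tilde Z^n:=\tilde Y^n_T+\mathcal{E}$; convexity of the negative part gives $(\tilde Z^n)_-\le H$. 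Choose $R\in\mathcal{M}^e_V$ with $E_R|\mathcal{E}|<\infty$ (Assumption~\ref{m0}); since $\Phi^*\le V$, $dR/dP$ lies in $L^{\Phi^*}$, so Orlicz--H\"older yields $E_R H<\infty$, and combined with $E_R\tilde Y^n_T\le 0$ this makes $(\tilde Z^n)$ bounded in $L^1(R)$. Koml\'os' theorem then supplies further convex combinations $\hat Y^n:=\sum_j\beta_j^n\tilde Y^j\in\mathcal{A}$ such that $\hat Z^n:=\hat Y^n_T+\mathcal{E}\to Z^*$ $R$-a.s.\ (hence $P$-a.s., as $R\sim P$), with $Z^*\in L^1(R)$ and $(Z^*)_-\le H\in L^\Phi$.

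The main obstacle is turning this terminal-value convergence into a process $Y^\dagger\in\mathcal{S}$ with $Y^\dagger_T=Z^*-\mathcal{E}$. The uniform bound $\hat Y^n_T\ge -H-\mathcal{E}_+$ together with the $R$-supermartingale property shows that, for each $R\in\mathcal{M}^a_V$, $\hat Y^n_t+E_R[H+\mathcal{E}_+\mid\mathcal{F}_t]$ is a nonnegative cadlag $R$-supermartingale; I would invoke the F\"ollmer--Kramkov / Delbaen--Schachermayer convex-compactness result for nonnegative cadlag supermartingales to pass to further convex combinations converging at every (rational) $t$, and identify $Y^\dagger$ either by simultaneous selection across all $R\in\mathcal{M}^a_V$ or as the cadlag envelope of the appropriate essential supremum of $E_R[Z^*-\mathcal{E}\mid\mathcal{F}_\cdot]$ over $R\in\mathcal{M}^a_V$. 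Once $Y^\dagger$ is constructed, $U(\hat Z^n)\to U(Y^\dagger_T+\mathcal{E})$ a.s.\ with the integrable dominant $K\vee\Phi(H)$, while concavity of $U$ propagates the maximising property to convex combinations (so $EU(\hat Z^n)\to\alpha$); dominated convergence then gives $EU(Y^\dagger_T+\mathcal{E})=\alpha$, and a diagonal selection of $\mathcal{A}$-approximants of the $\hat Y^n$ places $Y^\dagger$ in $\mathcal{A}_U$.
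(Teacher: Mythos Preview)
Your argument is on the right track up to the point you yourself flag as ``the main obstacle,'' and that is where a genuine gap lies. Neither of the two fixes you sketch for constructing $Y^\dagger\in\mathcal{S}$ works in this generality. The F\"ollmer--Kramkov/Delbaen--Schachermayer supermartingale compactness produces convex combinations that depend on the reference measure $R$, so a ``simultaneous selection across all $R\in\mathcal{M}^a_V$'' is not available without further structure on that set; and taking the c\`adl\`ag envelope of $\operatorname*{ess\,sup}_{R} E_R[Z^*-\mathcal{E}\mid\mathcal{F}_\cdot]$ is an optional-decomposition-type construction whose supermartingale property under \emph{every} $R$ would need stability hypotheses (pasting, convexity, closedness) on $\mathcal{M}^a_V$ that are explicitly not assumed here. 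There is also a secondary problem: your shift by $E_R[H+\mathcal{E}_+\mid\mathcal{F}_t]$ requires $\mathcal{E}_+\in L^1(R)$ for every $R\in\mathcal{M}^a_V$, whereas Assumption~\ref{m0} only grants this for one fixed $Q$.

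The paper resolves the obstacle by reversing your order of operations and working at the path level rather than only at the terminal time. First, the $Q$-supermartingale property plus the terminal Fenchel bound give $\sup_n E_Q|Y^n_t|<\infty$ at every rational $t$, so Koml\'os with a diagonal argument yields Ces\`aro means $\tilde Y^n$ converging a.s.\ at \emph{all} rational times simultaneously. Next, with $\varepsilon_t:=E_Q[\mathcal{E}\mid\mathcal{F}_t]$, the process $[\tilde Y^n_t+\varepsilon_t]_-$ is a nonnegative submartingale, and a Doob-type maximal inequality in the Orlicz norm (Lemma~\ref{gartingale}) controls $\bigl\|\sup_t[\tilde Y^n_t+\varepsilon_t]_-\bigr\|_{\Phi^*}$ uniformly in $n$. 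Only then is Lemma~\ref{do} invoked---on these path-wise suprema, not on terminal values---yielding further convex combinations $\overline Y^n$ with $w:=\sup_{n,t}(\overline Y^n_t+\varepsilon_t)_-\in L^{\Phi^*}$. Fenchel's inequality makes $w$ integrable under \emph{every} $R\in\mathcal{M}^a_V$ at once, so Fatou passes the $R$-supermartingale property of $\overline Y^n$ to the rational-time limit for all $R$ simultaneously; c\`adl\`ag regularization then lands the limit in $\mathcal{S}$. Your final dominated-convergence step (using Assumption~\ref{moder} to convert the Orlicz-norm bound into a moment bound) is essentially the same as the paper's.
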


\begin{remark}\label{rem} {\rm 
A sufficient condition for $\mathcal{A}_U\neq\emptyset$ is
\begin{equation}\label{kot}
EU(\mathcal{E})>-\infty,\ 0\in\mathcal{A},
\end{equation} 
since $0\in\mathcal{A}_U$ in that case. Actually, 
under Assumption \ref{moder}, it is not difficult to show that \eqref{kot} implies $EU(\mathcal{E}+z)>-\infty$
for all $z\leq 0$ as well. So, under \eqref{kot}, $Y\in \mathcal{A}_U$ whenever $Y\in\mathcal{A}$ and $Y_T$ is bounded.}
\end{remark}

\begin{proof}[Proof of Theorem \ref{main1}.] Let $\beta$ denote the left derivative of
$U$ at $0$. Define $\Phi^*(x):=-U(-x)$, $x\geq 0$.
Its conjugate equals \[
\Phi(y):=\Phi^{**}(y)=\left\{
\begin{aligned}
0,&\quad\mbox{ if }0\leq y\leq\beta,\\
V(y)-V(\beta),&\quad\mbox{ if }y>\beta,
\end{aligned}
\right. 
\]
see e.g. \cite{sara_orlicz}. $\Phi^*$ is a Young function  
by \eqref{levente} hence $\Phi$ is also a Young function which is of class $\Delta_2$ by \eqref{hetente}.

Let $Y^n\in\mathcal{A}_U$, $n\in\mathbb{N}$ be such that 
\begin{equation}\label{maxmi}
EU(Y^n_T+\mathcal{E})\to\sup_{Y\in\mathcal{A}_U}EU(Y_T+\mathcal{E}),\ n\to\infty,
\end{equation}
where the latter supremum is $>-\infty$ by $\mathcal{A}_U\neq \emptyset$.
By definition of $\mathcal{A}_U$ we may and will suppose that $Y^n\in\mathcal{A}$, $n\geq 1$.

Fix $Q$ as in Assumption \ref{m0}.
By the Fenchel inequality,
\begin{equation}\label{bross}
E_Q[Y^n_T+\mathcal{E}]_-\leq E\Phi(dQ/dP) -EU(-[Y^n_T+\mathcal{E}]_-) 
\end{equation}
and, by $U(0)=0$, we have $-U(-[Y^n_T+\mathcal{E}]_-)=[U(Y^n_T+\mathcal{E})]_-$.

Let $C\geq 0$ denote an upper bound for $U$. We must have
\begin{equation}\label{mam1}
\sup_n E[U(Y^n_T+\mathcal{E})]_-<\infty
\end{equation}
otherwise 
\[
\inf_n\, EU(Y^n_T+\mathcal{E})\leq C-\sup_n\, E[U(Y^n_T+\mathcal{E})]_-=-\infty
\]
would hold which clearly contradicts \eqref{maxmi}. 

We may and will suppose that $T\in\mathbb{Q}$.
As $Y^n$ is a supermartingale under $Q$, we have that $[Y^n_t]_-$, $t\in [0,T]$ is
a $Q$-submartingale hence we get, for all $t\in\mathbb{Q}\cap [0,T]$,
\begin{eqnarray*}
\sup_n E_Q\vert Y^n_t\vert &\leq& 2\sup_n E_Q[Y^n_t]_- \leq\\
 2\sup_n E_Q[Y^n_T]_- &\leq& 2\sup_n E_Q[Y^n_T+\mathcal{E}]_- + 2E_Q\mathcal{E}_+<\infty,
\end{eqnarray*}
by \eqref{bross}, \eqref{mam1} and Assumption \ref{m0}
so the theorem of Koml\'os and a diagonal argument imply the existence of a subsequence (which we continue to
denote by $n$) such that \[
\tilde{Y}^n:=\frac{1}{n}\sum_{j=1}^n Y^j\in \mathcal{A},\quad n\geq 1,
\]
satisfy $\tilde{Y}^n_t\to \tilde{Y}^{\dagger}_t$ $Q$-almost surely (and hence also $P$-almost surely) for $t\in [0,T]\cap\mathbb{Q}$ where
$\tilde{Y}^{\dagger}_t$, $t\in [0,T]\cap\mathbb{Q}$ is a (finite-valued) process. 

As $U$ is concave, we have 
\[
EU(\tilde{Y}^n_T+\mathcal{E})\to\sup_{Y\in\mathcal{A}_U}EU(Y^n_T+\mathcal{E}),\ n\to\infty,
\]
as well as
\begin{equation}\label{mam}
\sup_n E\Phi^*([\tilde{Y}^n_T+\mathcal{E}]_-)=\sup_n E[U(\tilde{Y}^n_T+\mathcal{E})]_-
\leq \sup_n E[U(Y^n_T+\mathcal{E})]_-<\infty.
\end{equation}


Set $\xi_n:=[\tilde{Y}^n_T+\mathcal{E}]_-$.
Note that
\[
\sup_n E\Vert \xi_n\Vert_{\Phi^*}<\infty,
\]
by \eqref{mam} and Remark \ref{amu}.
Applying Lemma \ref{do}, we get convex weights $\alpha_j^n\geq 0$, $n\leq j\leq M(n)$,
$\sum_{j=n}^{M(n)} \alpha_j^n=1$ such that 
\[
Z^n:=\sum_{j=n}^{M(n)}\alpha_j^n \xi_n,\ n\geq 1
\]
satisfy
\begin{equation}\label{zazen}
L:=\Vert \sup_n Z^n\Vert_{\Phi^*}+1<\infty.
\end{equation}
Now define 
\[
\overline{Y}^n:=\sum_{j=n}^{M(n)}\alpha_j^n \tilde{Y}^n\in\mathcal{A},\ n\geq 1,
\]
and set 
$w_T:=\sup_n\left(\overline{Y}^n_T+\mathcal{E}\right)_-.$ 

We claim that $w_T$ is $R$-integrable for all $R\in \mathcal{M}^a_V$.
Indeed,
using convexity of the mapping $x\to x_-$,
\begin{eqnarray}\label{nedelni}
w_T \leq
\sup_{n}\sum_{j=n}^{M(n)}\alpha_j^n (\tilde{Y}^j_T +\mathcal{E})_- \leq
\sup_n Z^n.
\end{eqnarray}

By the Fenchel inequality and \eqref{zazen},
\begin{eqnarray*}
E_R\sup_n Z^n &\leq& LE_R\left[\frac{\sup_n Z^n}{L}\right]\leq 
LE\Phi(dR/dP)+ LE\Phi^*\left(\frac{\sup_n Z^n}{L}\right)\leq \\
LE\Phi(dR/dP)+ L &<& \infty,
\end{eqnarray*}
which shows the claim. 

Now take an arbitrary $R\in\mathcal{M}^a_V$. 
Define the $R$-martingale $\varepsilon_t^R:=E_R[\mathcal{E}\vert\mathcal{F}_t]$,
$t\in [0,T]$ and note that 
$\varepsilon^R_T=\mathcal{E}$ by
Assumption \ref{m0}. 
Define also the $R$-martingale 
$$
w_t^R:=E_{R}\left[w_T\vert
\mathcal{F}_t\right],\ t\in [0,T],
$$ 
(we take a c\`adl\`ag version for both $w^R$ and $\varepsilon^R$).

Since, clearly, $\overline{Y}_t\to \tilde{Y}^{\dagger}_t$ for $t\in [0,T]\cap\mathbb{Q}$ and, by the $R$-submartingale property of $[\overline{Y}_t+\varepsilon^R_t]_-$, $t\in [0,T]$,
\[
\sup_n\, E_R[\overline{Y}_t+\varepsilon^R_t]_-\leq \sup_n\, E_R\left[ [\overline{Y}_T+\varepsilon_T]_-\vert\mathcal{F}_t\right]\leq w^R_t,
\]
we get that $\tilde{Y}^{\dagger}_t+\varepsilon_t^R$, $t\in [0,T]\cap\mathbb{Q}$ is an $R$-supermartingale
for each $R\in\mathcal{M}^a_V$ and so is $\tilde{Y}^{\dagger}_t$, 
$t\in [0,T]\cap\mathbb{Q}$, in particular, this holds for $R=Q\in \mathcal{M}^e_V$. Hence also 
\[
Y^{\ddagger}_t:=\lim_{s\in\mathbb{Q}\cap [0,T],s\downarrow t} \tilde{Y}^{\dagger}_s,\ t\in [0,T),\ Y^{\ddagger}_T:=\tilde{Y}^{\dagger}_T,
\]
(where the limit exists $Q\sim P$-almost surely), is a c\`adl\`ag $R$-supermartingale, using $\tilde{Y}_t^{\dagger}\geq -w^R_t-\varepsilon^R_t$,
$t\in [0,T]$. As this argument works for every $R\in\mathcal{M}_V^a$, it follows that 
\begin{equation}\label{eddig}
Y^{\ddagger}\in \mathcal{S}.
\end{equation}

Note that, up to this point, we have not used Assumption \ref{moder} yet.
The function $\Phi^*$ is of class $\Delta_2$ by \eqref{berente}. Hence
from \eqref{zazen} and Remark \ref{amu}, 
\begin{equation}\label{mable}
E\Phi^*(\sup_n Z^n)<\infty
\end{equation}
follows. Noting \eqref{mable}, \eqref{nedelni} 
and the fact that $U$ is bounded from above, dominated
convergence implies 
\begin{equation}\label{doc}
U(\overline{Y}^n_T+\mathcal{E})\to U({Y}^{\ddagger}_T+\mathcal{E})\mbox{ in }L^1,\ n\to\infty,
\end{equation} 
and, by the construction of the sequence $\overline{Y}^n$, we get that
\[
EU(Y^{\ddagger}_T+\mathcal{E})=\sup_{Y\in\mathcal{A}_U}EU(Y^n_T+\mathcal{E}).
\]
As $Y^{\ddagger}\in\mathcal{A}_U$ holds by \eqref{doc}, we can set $Y^{\dagger}:=Y^{\ddagger}$. 
\end{proof}

As we have pointed out, Assumption \ref{moder} is restrictive and it would be desirable to
drop it. This is possible if we modify our assumptions on the domain of optimization.
A sequence $Y^n\in\mathcal{S}$, $n\in\mathbb{N}$ is called \emph{Fatou-convergent}, 
if $Y^n_T\to Z$, $n\to\infty$ a.s. for some random variable $Z$ and for every $R\in\mathcal{M}_V^a$ there is an $R$-martingale $w^R$ with
$\inf_{n}Y^n_t\geq w^R_t$ a.s., for all $t\in [0,T]$.
A class $\mathcal{I}\subset\mathcal{S}$ is \emph{Fatou-closed} if, for every Fatou-convergent
sequence $Y^n\in\mathcal{I}$, $n\in\mathbb{N}$, there exists $Y\in\mathcal{I}$ with the property $Y_T\geq Z$ a.s. 

\begin{theorem}\label{main1.5}
Let Assumptions \ref{u} and \ref{m0} be in force and let $\emptyset\neq\mathcal{I}\subset\mathcal{S}$
be convex and Fatou-closed. Then there is $Y^{\dagger}\in\mathcal{I}$ such that
\[
EU(Y^{\dagger}_T+\mathcal{E})=\sup_{Y\in\mathcal{I}}EU(Y_T+\mathcal{E}).
\]
\end{theorem}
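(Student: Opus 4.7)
The plan is to mimic the proof of Theorem~\ref{main1}, substituting the Fatou-closedness of $\mathcal{I}$ for Assumption~\ref{moder} and the bounded-above hypothesis on $U$ that let the original proof close via dominated convergence. I first dispose of $s:=\sup_{Y\in\mathcal{I}}EU(Y_T+\mathcal{E})=-\infty$ as trivial; for the remaining case, Fenchel's inequality together with $E_Q Y_T\leq 0$ (the $Q$-supermartingale property built into $\mathcal{S}$, with $Q$ from Assumption~\ref{m0}) yields $EU(Y_T+\mathcal{E})\leq EV(dQ/dP)+E_Q\mathcal{E}<\infty$, so also $s<\infty$. Fix a maximizing sequence $Y^n\in\mathcal{I}$.

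Setting $\Phi^*(x):=-U(-x)$ as in the proof of Theorem~\ref{main1}, I aim for the uniform Orlicz bound $\sup_n\Vert(Y^n_T+\mathcal{E})_-\Vert_{\Phi^*}<\infty$. Via the identity $E[U(\cdot)]_-=E\Phi^*((\cdot)_-)$ and Remark~\ref{amu}, this reduces to $\sup_n E[U(Y^n_T+\mathcal{E})]_-<\infty$.

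Granted this bound, the argument proceeds in parallel with Theorem~\ref{main1}: a Koml\'os and diagonal argument at rational times produces convex combinations $\tilde{Y}^n\in\mathcal{I}$ (by convexity) with $\tilde{Y}^n_t\to\tilde{Y}^{\dagger}_t$ a.s.\ for $t\in\mathbb{Q}\cap[0,T]$; Lemma~\ref{gartingale} applied to the $Q$-submartingale $(\tilde{Y}^n+\varepsilon)_-$, with $\varepsilon_t:=E_Q[\mathcal{E}\vert\mathcal{F}_t]$, lifts the $L^{\Phi^*}$ control to the running supremum; Lemma~\ref{do} then yields further convex combinations $\overline{Y}^n\in\mathcal{I}$ for which $w:=\sup_n\sup_t(\overline{Y}^n_t+\varepsilon_t)_-$ lies in $L^{\Phi^*}$. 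The Fenchel-norm estimate $E_R w\leq LE\Phi(dR/dP)+L$ from the proof of Theorem~\ref{main1} places $w\in\bigcap_R L^1(R)$, and a parallel treatment of $\sup_t\vert\varepsilon_t\vert$ through Lemma~\ref{gartingale} gives $\inf_{n,t}\overline{Y}^n_t\in\bigcap_R L^1(R)$.

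Fatou-closedness, applied to $\overline{Y}^n$ with a.s.\ terminal limit $Z:=\tilde{Y}^{\dagger}_T$, now delivers $Y^{\dagger}\in\mathcal{I}$ with $Y^{\dagger}_T\geq Z$. For optimality, I apply Fatou's lemma to the nonnegative random variable $V(dQ/dP)+(dQ/dP)(\overline{Y}^n_T+\mathcal{E})-U(\overline{Y}^n_T+\mathcal{E})$, combining it with $E_Q\overline{Y}^n_T\leq 0$ and the uniform $Q$-integrable lower bound on $\overline{Y}^n_T+\mathcal{E}$ to extract $EU(Z+\mathcal{E})\geq s$; monotonicity of $U$ together with $Y^{\dagger}_T\geq Z$ then yields $EU(Y^{\dagger}_T+\mathcal{E})\geq s$, hence equality. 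The main obstacle is the uniform bound $\sup_n E[U(Y^n_T+\mathcal{E})]_-<\infty$: without $U$ bounded above, direct Fenchel estimates give only the circular relation $E[U]_+\leq K+E[U]_-$, so a more delicate exploitation of the maximizing property is needed at this step.
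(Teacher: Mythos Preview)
Your outline mirrors the paper's proof almost exactly: dispose of the case $s=-\infty$, run the argument of Theorem~\ref{main1} (with $\mathcal{I}$ in place of $\mathcal{A},\mathcal{A}_U$) through the construction of $Y^{\ddagger}$, then invoke Fatou-closedness and a Fatou-type inequality to conclude. The difference---and the source of your difficulties---is that you have dropped the hypothesis that $U$ is bounded above. The paper in fact retains it: the sentence immediately following Theorem~\ref{main1.5} reads ``A weakness of Theorems~\ref{main1},~\ref{main1.5} is that $U$ was assumed to be bounded from above.'' So this assumption is in force, even though the theorem statement omits it.

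With $U\le C$ restored, your ``main obstacle'' dissolves: if $\sup_n E[U(Y^n_T+\mathcal{E})]_-=\infty$ then $EU(Y^n_T+\mathcal{E})\le C-E[U(Y^n_T+\mathcal{E})]_-\to-\infty$ along a subsequence, contradicting the maximizing property. This is exactly how \eqref{mam1} is obtained in the proof of Theorem~\ref{main1}, and the paper's proof of Theorem~\ref{main1.5} simply says to follow that argument verbatim up to \eqref{eddig}. Likewise, the endgame is just the reverse Fatou lemma applied to $U(\overline{Y}^n_T+\mathcal{E})\le C$, giving $EU(Y^{\ddagger}_T+\mathcal{E})\ge\limsup_n EU(\overline{Y}^n_T+\mathcal{E})=s$, and then $EU(Y^{\dagger}_T+\mathcal{E})\ge EU(Y^{\ddagger}_T+\mathcal{E})$ by monotonicity.

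Your alternative Fatou argument via the nonnegative quantity $V(dQ/dP)+(dQ/dP)(\overline{Y}^n_T+\mathcal{E})-U(\overline{Y}^n_T+\mathcal{E})$ does not close as written: carrying it through yields $EU(Z+\mathcal{E})\ge s+E_Q Z-\ $(something nonnegative), and the supermartingale property only gives $E_Q Z\le 0$, which is the wrong sign. So without the upper bound on $U$ the gap you flag is genuine; with it, both that gap and the need for your detour disappear.
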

\begin{proof}
If the supremum is $-\infty$ then there is nothing to prove. Otherwise we follow the steps of the proof
of Theorem \ref{main1} up to \eqref{eddig} but with $\mathcal{A},\mathcal{A}_U$ both replaced by $\mathcal{I}$.
Fatou-closedness of $\mathcal{I}$ implies that there is $Y^{\dagger}\in\mathcal{I}$ with 
$Y^{\dagger}_T\geq Y^{\ddagger}_T$.
By the construction of $\overline{Y}^n$ and by Fatou's lemma,
\[
EU(Y^{\dagger}_T+\mathcal{E})\geq EU(Y^{\ddagger}_T+\mathcal{E})\geq\sup_{Y\in\mathcal{I}}EU(Y_T+\mathcal{E}),
\]
but there must be equalities here since $Y^{\dagger}\in\mathcal{I}$.
\end{proof}

\begin{remark}{\rm The Fatou-closure property of the domain of optimization $\mathcal{I}$ is familiar
from the arbitrage theory of frictionless markets. However, the notion we use is different
from that of e.g. \cite{ds} and it is better adapted to our purposes. 

The definition of $\mathcal{A}_U$ stressed the possibility of 
approximating each element in the domain of optimization by value processes of ``admissible'' strategies (i.e. by strategies from the class $\mathcal{A}$). 
This is a crucial feature in large markets, see Section \ref{lar}. 
The domain of optimization $\mathcal{I}$ can be thought of as being possibly ``larger'', requiring only the supermartingale property for each value process. Considering domains like $\mathcal{I}$ follows the stream of literature 
represented by e.g. \cite{bifri} and \cite{oz}.}
\end{remark}

A weakness of Theorems \ref{main1}, \ref{main1.5} is that $U$ was assumed to be bounded from above. 
One can relax this condition at
the price of requiring more about $\mathcal{M}^e_V$.

\begin{assumption}\label{m2} Let 
\begin{equation}\label{maki}
U(x)\leq D[x^{\alpha}+1],\quad x\geq 0,
\end{equation} 
with some $0\leq \alpha<1$ and $D>0$. $\mathcal{E}$ is $\mathcal{F}_T$-measurable and $E_R|\mathcal{E}|<\infty$ for each $R\in\mathcal{M}_V^a$.
We stipulate the existence of $Q\in\mathcal{M}^e_V$ 
such that $E(dP/dQ)^{r}<\infty$ for some $r>\alpha/(1-\alpha)$. 
\end{assumption}

\begin{remark}
{\rm We explain the meaning of this assumption on a simple example of a utility function. Let $0<\alpha<1$ and
$\beta>1$ and set
\[
U(x):=\frac{1}{\alpha}[(1+x)^{\alpha}-1]\mbox{ for }x\geq 0,\quad U(x):=-\frac{1}{\beta}[(1-x)^{\beta}-1]\mbox{ for }x<0.
\]
In this case a direct calculation shows that $Q\in\mathcal{M}^e_V$ implies 
\[
E(dP/dQ)^{\alpha/(1-\alpha)}<\infty,
\] 
but integrability with a higher power $r>\alpha/(1-\alpha)$ does not necessarily hold. What we require in 
Assumption \ref{m2} is thus ``slightly more integrability of $dP/dQ$'' than what is implied by the standard assumption 
on the existence of $Q\in\mathcal{M}^e_V$. It would
be nice to drop this latter condition but we do not know how to achieve this.

We remark that \eqref{maki} is slightly weaker than the standard
condition of ``reasonable asymptotic elasticity'',
see \cite{walter} and Lemma 6.5 of \cite{doks}.}
\end{remark}

\begin{theorem}\label{main2} Let Assumptions \ref{u}, \ref{moder} and \ref{m2} be in force and let $\mathcal{A}_U\neq\emptyset$. Then there exists
$Y^{\dagger}\in\mathcal{A}_U$ such that
\[
EU(Y^{\dagger}_T+\mathcal{E})=\sup_{Y\in\mathcal{A}_U}EU(Y_T+\mathcal{E}).
\]
\end{theorem}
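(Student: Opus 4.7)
The plan is to follow the proof of Theorem \ref{main1} verbatim up to the construction of the candidate limit process $Y^{\ddagger}\in\mathcal{S}$: that part of the argument uses only Assumptions \ref{u} and \ref{moder} together with the existence of some $Q\in\mathcal{M}^e_V$ with $E_Q|\mathcal{E}|<\infty$, all of which are available here. In particular, starting from a maximising sequence $Y^n\in\mathcal{A}$, I would pass to the Koml\'os Ces\`aro averages $\tilde{Y}^n$ and then to the Delbaen--Owari convex combinations $\overline{Y}^n$, obtain the negative envelope $w=\sup_n\sup_t(\overline{Y}^n_t+\varepsilon_t)_-\le\sup_n Z^n\in L^{\Phi^*}$, regularise the pointwise limit on $[0,T]\cap\mathbb{Q}$ into a c\`adl\`ag $R$-supermartingale $Y^{\ddagger}$, and finally obtain $L^1$-convergence of the negative parts from $[U(\overline{Y}^n_T+\mathcal{E})]_-\le\Phi^*(w)$ via dominated convergence, which is where Assumption \ref{moder} enters, exactly as in \eqref{mable}.

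The new work is to upgrade the pointwise convergence of the positive parts $[U(\overline{Y}^n_T+\mathcal{E})]_+$ to $L^1$-convergence, and this will be the main obstacle now that $U$ is no longer uniformly bounded. By the growth condition \eqref{maki} it is enough to show that $(\overline{Y}^n_T+\mathcal{E})_+^{\alpha}$ is uniformly integrable. With $Q$ as in Assumption \ref{m2}, the $Q$-supermartingale property and $\overline{Y}^n_0=0$ give $E_Q\overline{Y}^n_T\le 0$; combined with the uniform bound on $E_Q[\overline{Y}^n_T]_-$ already obtained in the proof of Theorem \ref{main1} and with $E_Q\mathcal{E}_+<\infty$, this yields $\sup_n E_Q(\overline{Y}^n_T+\mathcal{E})_+<\infty$. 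Since $r>\alpha/(1-\alpha)$ is equivalent to $\alpha<r/(1+r)$, I can choose $\alpha'\in(\alpha,r/(1+r))$, and then H\"older's inequality under $Q$ with exponents $1/(1-\alpha')$ and $1/\alpha'$ gives
\[
E(\overline{Y}^n_T+\mathcal{E})_+^{\alpha'}
=E_Q\!\left[\tfrac{dP}{dQ}\,(\overline{Y}^n_T+\mathcal{E})_+^{\alpha'}\right]
\le \bigl[E(dP/dQ)^{\alpha'/(1-\alpha')}\bigr]^{1-\alpha'}\!\bigl[E_Q(\overline{Y}^n_T+\mathcal{E})_+\bigr]^{\alpha'},
\]
where I have used the identity $E_Q(dP/dQ)^{1/(1-\alpha')}=E(dP/dQ)^{\alpha'/(1-\alpha')}$ and the fact that this last quantity is finite because $\alpha'/(1-\alpha')<r$ and $L^r\subset L^{\alpha'/(1-\alpha')}$ on a probability space.

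Hence $\sup_n E(\overline{Y}^n_T+\mathcal{E})_+^{\alpha'}<\infty$ for an exponent $\alpha'$ strictly greater than $\alpha$, so by de la Vall\'ee-Poussin the family $(\overline{Y}^n_T+\mathcal{E})_+^{\alpha}$, and therefore also $[U(\overline{Y}^n_T+\mathcal{E})]_+$, is uniformly integrable. Together with the $L^1$-convergence of the negative parts this gives $U(\overline{Y}^n_T+\mathcal{E})\to U(Y^{\ddagger}_T+\mathcal{E})$ in $L^1$, so $Y^{\dagger}:=Y^{\ddagger}$ lies in $\mathcal{A}_U$ and attains the supremum, just as at the end of the proof of Theorem \ref{main1}. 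The crux of the argument is the H\"older estimate above: the strict inequality $r>\alpha/(1-\alpha)$ in Assumption \ref{m2} is exactly what is needed to find an admissible $\alpha'>\alpha$, i.e.\ to gain strictly more than the $\alpha$-th moment of $(\overline{Y}^n_T+\mathcal{E})_+$, and without that margin the uniform integrability would have to be extracted by a different tool.
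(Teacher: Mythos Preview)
There is a genuine gap at the very first step. You claim that the proof of Theorem \ref{main1} can be followed verbatim up to the construction of $Y^{\ddagger}$, asserting that ``that part of the argument uses only Assumptions \ref{u} and \ref{moder} together with the existence of some $Q\in\mathcal{M}^e_V$ with $E_Q|\mathcal{E}|<\infty$''. This is not true: the derivation of \eqref{mam1}, namely $\sup_n E[U(Y^n_T+\mathcal{E})]_-<\infty$, in the proof of Theorem \ref{main1} explicitly uses an upper bound $C$ for $U$, via the line
\[
\inf_n EU(Y^n_T+\mathcal{E})\le C-\sup_n E[U(Y^n_T+\mathcal{E})]_-.
\]
When $U$ is unbounded above, $E[U(Y^n_T+\mathcal{E})]_+$ and $E[U(Y^n_T+\mathcal{E})]_-$ can both diverge while their difference stays bounded, so \eqref{maxmi} alone does not give \eqref{mam1}. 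But \eqref{mam1} is what feeds into \eqref{bross} to produce the $L^1(Q)$-bound on $Y^n_t$ needed for Koml\'os; without it you never reach $\tilde{Y}^n$, let alone $\overline{Y}^n$ or $Y^{\ddagger}$. In particular, the ``uniform bound on $E_Q[\overline{Y}^n_T]_-$ already obtained in the proof of Theorem \ref{main1}'' that you invoke later is not available to you.

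The fix is exactly the H\"older estimate you deploy at the end, but applied \emph{first} to the raw maximising sequence $Y^n$ rather than to $\overline{Y}^n$. This is what the paper does: with $\theta=r/(1+r)\in(\alpha,1)$ one shows, for any $X$ with $E_QX\le K$, that $EU(X_+)\le DC_1\bigl[(E\Phi(dQ/dP)-EU(-X_-))^{\theta}+K^{\theta}\bigr]+2D$; applied to $X=Y^n_T+\mathcal{E}$ this forces $\sup_n E_Q[Y^n_T+\mathcal{E}]_-<\infty$ (otherwise $EU\to-\infty$ along a subsequence because $\theta<1$), hence $\sup_n EU([Y^n_T+\mathcal{E}]_+)<\infty$, and only then does \eqref{mam1} follow. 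After that, your plan goes through, and your uniform-integrability argument for the positive parts is essentially the paper's own (the paper takes $\theta=r/(1+r)$, you take $\alpha'\in(\alpha,r/(1+r))$, which is immaterial). So the ingredients are all there; you just need to run the H\"older/Fenchel estimate once at the level of $Y^n$ to get the argument off the ground.
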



\begin{proof} We only point out what needs to be modified with respect to the proof
of Theorem \ref{main1}. We borrow ideas from \cite{rasonyi}. Take $Q$ as in Assumption \ref{m2}. Recall that $U(0)=0$.

Let $1>\theta>\alpha$ be such that $\theta/(1-\theta)=r$. Let $K\geq 0$.
For any random variable $X$ with $E_Q X\leq K$ we can estimate, using H\"older's and Fenchel's inequalities as well
as the elementary $(x+y)^{\theta}\leq x^{\theta}+y^{\theta}$, $x^{\alpha}\leq x^{\theta}+1$, $x,y\geq 0$, 
\begin{eqnarray}
EU(X_+) &\leq& D[EX_+^{\theta} +2]\leq D[C_1(E_Q X_+)^{\theta}+2] \leq\label{ulm}\\
D[C_1(E_Q X_-+K)^{\theta}+2] &\leq& DC_1[(E_Q X_-)^{\theta}+K^{\theta}]+2D \leq\nonumber\\
& & DC_1 [(E\Phi(dQ/dP)-EU(-X_-))^{\theta}+K^{\theta}]+2D\nonumber, 
\end{eqnarray}
where $C_1:=(E_Q[dP/dQ]^{1/(1-\theta)})^{1-\theta}=(E[dP/dQ]^{\theta/(1-\theta)})^{1-\theta}<\infty$.

Applying \eqref{ulm} to $X:=Y^n_T+\mathcal{E}$ with $K:=E_Q\mathcal{E}_+$, it follows that if we had  
$E_Q[Y^n_T+\mathcal{E}]_-\to\infty$ along a subsequence then we would also have
\begin{equation}\label{tomor}
EU(Y^n_T+\mathcal{E})=EU([Y^n_T+\mathcal{E}]_+)-(-EU(-[Y^n_T+\mathcal{E}]_-))\to -\infty 
\end{equation}
along the same subsequence since $\theta<1$. This contradicts the choice of $Y^n$ so necessarily
\begin{equation*}
\sup_n E_Q[Y^n_T+\mathcal{E}]_-<\infty
\end{equation*}
and then also
\begin{equation}\label{macci}
\sup_n E_Q |Y^n_T+\mathcal{E}|<\infty,
\end{equation} 
since $Y^n$ is a $Q$-supermartingale and Assumption \ref{m2} holds. From \eqref{ulm} it follows that
\[
\sup_n EU([Y^n_T+\mathcal{E}]_+)<\infty. 
\]
The latter observation implies $\sup_n E[U(Y^n_T+\mathcal{E})]_-<\infty$ as well, otherwise 
$EU(Y^n_T+\mathcal{E})\to -\infty$ would hold by \eqref{tomor}
along a subsequence, which would contradict the choice of $Y^n$. 
Hence \eqref{mam1} can be established
also for $U$ not bounded above. We then follow the proof of Theorem \ref{main1}.

Note that
the $\overline{Y}^n$ are convex combinations of the $Y^n$ so  
\[
\sup_n E_Q |\overline{Y}^n_T+\mathcal{E}|<\infty.
\]
We claim that the family 
\begin{equation}\label{uniff}
[U(\overline{Y}^n_T+\mathcal{E})]_+,\ n\geq 1,
\end{equation}
is uniformly integrable. Indeed, by \eqref{macci} and by \eqref{ulm},
\[
\sup_n E[\overline{Y}_T^n+\mathcal{E}]_+^{\theta}<\infty.
\]
Since $\theta>\alpha$, \eqref{maki} shows our claim.

It follows by the uniform integrability of \eqref{uniff}
and by \eqref{mable}
that $U(\overline{Y}^n_T+\mathcal{E})$ tends to $U(Y^{\ddagger}_T+\mathcal{E})$ in $L^1$ as $n\to\infty$
and we obtain the optimizer $Y^{\dagger}$ as before.
\end{proof}

Theorem \ref{main2} also has a version with $\mathcal{I}$ in lieu of $\mathcal{A}_U$.

\begin{theorem}\label{main2.5}
Let Assumptions \ref{u} and \ref{m2} be in force and let $\emptyset\neq\mathcal{I}\subset\mathcal{S}$
be convex and Fatou-closed. Define
$$
\mathcal{I}_U:=\{Y\in\mathcal{I}:\, EU(Y_T+\mathcal{E})>-\infty\}.
$$
Then there is $Y^{\dagger}\in \mathcal{I}_U$ such that
\[
EU(Y^{\dagger}_T+\mathcal{E})=\sup_{Y\in\mathcal{I}_U}EU(Y_T+\mathcal{E}),
\]
provided that $\mathcal{I}_U\neq\emptyset$.
\end{theorem}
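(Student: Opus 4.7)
The plan is to merge the proof of Theorem~\ref{main2}, which handles unbounded $U$ via Assumption~\ref{m2}, with that of Theorem~\ref{main1.5}, which replaces Assumption~\ref{moder} by Fatou-closedness of the domain. First I would take a maximizing sequence $Y^n\in\mathcal{I}_U$; since $\mathcal{I}_U\neq\emptyset$ the supremum is $>-\infty$. Using Assumption~\ref{m2} and the H\"older--Fenchel estimate \eqref{ulm} exactly as in the proof of Theorem~\ref{main2}, I would derive the a priori bounds $\sup_n E_Q|Y^n_T+\mathcal{E}|<\infty$, $\sup_n EU([Y^n_T+\mathcal{E}]_+)<\infty$ and $\sup_n E[U(Y^n_T+\mathcal{E})]_-<\infty$. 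The latter replaces \eqref{mam1} and is the only place where the unboundedness of $U$ really forces work.

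Next, running the Koml\'os and diagonal arguments from the proof of Theorem~\ref{main1}, I would pass to Ces\`aro averages $\tilde{Y}^n\in\mathcal{I}$ (using convexity of $\mathcal{I}$) that converge $Q$-a.s.\ on $\mathbb{Q}\cap[0,T]$ to some process $\tilde{Y}^{\dagger}$. Setting $\varepsilon_t:=E_Q[\mathcal{E}|\mathcal{F}_t]$ and $X_t^n:=[\tilde{Y}^n_t+\varepsilon_t]_-$, Lemma~\ref{gartingale} bounds $\|X_*^n\|_{\Phi^*}$, so Lemma~\ref{do} produces convex combinations $\overline{Y}^n$ of the $\tilde{Y}^n$'s whose envelope $w:=\sup_n\sup_t(\overline{Y}^n_t+\varepsilon_t)_-$ lies in $L^{\Phi^*}$ and is $R$-integrable for every $R\in\mathcal{M}^a_V$ by Fenchel's inequality. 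Regularization then delivers a c\`adl\`ag $Y^{\ddagger}\in\mathcal{S}$ with $\overline{Y}^n_T\to Y^{\ddagger}_T$ a.s. Since $\inf_{n,t}\overline{Y}^n_t$ is bounded below by an $R$-integrable random variable for every $R\in\mathcal{M}^a_V$, Fatou-closedness of $\mathcal{I}$ yields $Y^{\dagger}\in\mathcal{I}$ with $Y^{\dagger}_T\geq Y^{\ddagger}_T$.

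The final step is where the absence of Assumption~\ref{moder} creates the main obstacle: $\Phi^*$ is no longer of class $\Delta_2$, so I cannot upgrade $\sup_n Z^n\in L^{\Phi^*}$ to $E\Phi^*(\sup_n Z^n)<\infty$ and the dominated convergence argument of Theorem~\ref{main2} is unavailable. Instead I would split $U=U_+-U_-$: the family $[U(\overline{Y}^n_T+\mathcal{E})]_+$ is uniformly integrable by the Theorem~\ref{main2} argument (Assumption~\ref{m2} applied to $[\overline{Y}^n_T+\mathcal{E}]_+$ via \eqref{ulm}), hence $E[U(\overline{Y}^n_T+\mathcal{E})]_+\to E[U(Y^{\ddagger}_T+\mathcal{E})]_+$; meanwhile Fatou's lemma applied to the nonnegative $[U(\overline{Y}^n_T+\mathcal{E})]_-$ gives $E[U(Y^{\ddagger}_T+\mathcal{E})]_-\leq\liminf E[U(\overline{Y}^n_T+\mathcal{E})]_-$. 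Subtracting yields $EU(Y^{\ddagger}_T+\mathcal{E})\geq\limsup EU(\overline{Y}^n_T+\mathcal{E})$, and concavity of $U$ (which makes the $EU(\overline{Y}^n_T+\mathcal{E})$ exceed suitable averages of $EU(Y^j_T+\mathcal{E})$) shows that this $\limsup$ equals $\sup_{Y\in\mathcal{I}_U}EU(Y_T+\mathcal{E})$. Monotonicity of $U$ promotes the inequality to $Y^{\dagger}$, and since the supremum is finite, $Y^{\dagger}\in\mathcal{I}_U$ is optimal.
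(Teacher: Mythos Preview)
Your proposal is correct and follows essentially the same route as the paper's own proof, which simply says to run the argument of Theorem~\ref{main2} with $\mathcal{I}_U$ in place of $\mathcal{A},\mathcal{A}_U$ and then, at the final step, replace the dominated-convergence argument (which relied on Assumption~\ref{moder}) by the combination of uniform integrability of $[U(\overline{Y}^n_T+\mathcal{E})]_+$ and Fatou's lemma for the negative part. You have spelled out explicitly the passage through Fatou-closedness of $\mathcal{I}$ to obtain $Y^{\dagger}\in\mathcal{I}$ with $Y^{\dagger}_T\geq Y^{\ddagger}_T$ (the paper leaves this implicit, mirroring Theorem~\ref{main1.5}), and your splitting $EU=E[U]_+-E[U]_-$ together with the concavity argument for the $\overline{Y}^n$ is exactly the mechanism the paper intends.
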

\begin{proof} Note that $\mathcal{I}_U$ is convex.
We can follow the proof of Theorem \ref{main2} with $\mathcal{I}_U$ in lieu of $\mathcal{A}$, $\mathcal{A}_U$,
except for the end, where we use the uniform
integrability of \eqref{uniff} and Fatou's lemma to show that
$$
EU(Y^{\dagger}_T+\mathcal{E})\geq EU(Y^{\ddagger}_T+\mathcal{E})\geq\sup_{Y\in\mathcal{I}_U}EU(Y_T+\mathcal{E}).
$$
The other inequality being trivial (since $Y^{\dagger}\in \mathcal{I}_U$), the result follows.
\end{proof}

We may as well put constraints on the terminal portfolio wealth. This corresponds to e.g. regulations
imposed on the portfolio manager so we regard $\mathcal{K}$ in the next assumption as a set
of ``acceptable positions''. 

\begin{assumption}\label{k1}
The set $\mathcal{K}\subset L^0$ is convex and closed in probability.
\end{assumption}

\begin{example}\label{oppo} {\rm For instance, one can choose $\mathcal{K}:=\{X\in L^0:\ El(X_-)\leq K\}$ with some convex $l:\mathbb{R}_+\to
\mathbb{R}_+$ and $K>0$ or $\mathcal{K}:=\{X\in L^0:\ E[X-X^{\sharp}]^2\leq K\}$ with some fixed $X^{\sharp}\in L^0$, these satisfy
Assumption \ref{k1} by Fatou's lemma. The first example is a restriction on acceptable losses while the
second ensures that the investors' portfolio value is not far from a reference entity $X^{\sharp}$ (such as the
value of a benchmark portfolio). One may also define $\mathcal{K}:=\{X\in L^0:\ X\geq X^{\flat}\}$ with some
$X^{\flat}$ where $X^{\flat}$ provides an almost sure control on losses. Note that no integrability assumption
on $X^{\flat}$ is necessary.}
\end{example}

Define 
\[
\mathcal{S}':=\{Y\in\mathcal{S}:\ Y_T\in\mathcal{K}\}
\]
and let $\mathcal{A},\mathcal{I}\subset\mathcal{S}'$ be non-empty. Define 
\begin{eqnarray}\nonumber
\mathcal{A}_U' &:=& \{Y\in\mathcal{S}':\mbox{ there is }Y^n\in\mathcal{A}'\mbox{ with }U(Y^n_T+\mathcal{E})\in L^1,\ n\geq 1\\
& &\mbox{ and }U(Y^n_T+\mathcal{E})\to U(Y_T+\mathcal{E}),\, n\to\infty\mbox{ in }L^1\}.\nonumber 
\end{eqnarray}

\begin{theorem}\label{const} Under Assumption \ref{k1}, Theorems \ref{main1} and \ref{main2} hold when
$\mathcal{A}_U$ is replaced by 
$\mathcal{A}_U'$ provided that $\mathcal{A}_U'\neq\emptyset$.
\end{theorem}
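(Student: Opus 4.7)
The plan is to re-run the proofs of Theorems \ref{main1} and \ref{main2} verbatim and then check, at the end, that the terminal value of the limiting process $Y^{\dagger}$ lies in $\mathcal{K}$. Since $\mathcal{A}\subset\mathcal{S}'$, every admissible portfolio value process already has $Y_T\in\mathcal{K}$, so the only issue is whether this membership survives the Komlós-type and Delbaen--Owari-type convex combinations, and the subsequent almost sure passage to the limit.

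First I would argue that, just as in the proof of Theorem \ref{main1}, the maximizing sequence in $\mathcal{A}_U'$ can be replaced (via a standard diagonal extraction, using the definition of $\mathcal{A}_U'$ and the $L^1$-continuity of $Y\mapsto U(Y_T+\mathcal{E})$) by a sequence $Y^n\in\mathcal{A}$. In particular $Y^n_T\in\mathcal{K}$ for every $n$. Then the Cesàro averages $\tilde{Y}^n_T=\frac{1}{n}\sum_{j=1}^n Y^j_T$ and the further convex combinations $\overline{Y}^n_T=\sum_{j=n}^{M(n)}\alpha_j^n \tilde{Y}^j_T$ produced by Lemma \ref{do} are still in $\mathcal{K}$ by convexity of $\mathcal{K}$.

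Next, the proof of Theorems \ref{main1} and \ref{main2} yields $\overline{Y}^n_T\to Y^{\ddagger}_T$ almost surely, and hence in probability. Since $\mathcal{K}$ is closed in probability by Assumption \ref{k1}, this forces $Y^{\ddagger}_T\in\mathcal{K}$. Combined with $Y^{\ddagger}\in\mathcal{S}$ established in the original proofs (up to \eqref{eddig}), we conclude $Y^{\ddagger}\in\mathcal{S}'$. The $L^1$-convergence $U(\overline{Y}^n_T+\mathcal{E})\to U(Y^{\ddagger}_T+\mathcal{E})$, which was the whole point of the construction, then exhibits $Y^{\ddagger}$ as an element of $\mathcal{A}_U'$ (the approximating admissible processes are simply the $\overline{Y}^n\in\mathcal{A}$). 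Setting $Y^{\dagger}:=Y^{\ddagger}$ gives the optimizer, and optimality over $\mathcal{A}_U'$ follows exactly as before from the concavity of $U$ and the fact that $\overline{Y}^n$ was built from a maximizing sequence.

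There is no real obstacle here: the argument is almost verbatim the one from Theorems \ref{main1} and \ref{main2}, and the only new input is the trivial stability property ``convex + closed in probability is preserved under a.s.\ limits of convex combinations''. The same remark applies to the Theorem \ref{main2} variant, since the proof there differs from Theorem \ref{main1} only in how the upper tail of $U$ is controlled, an issue orthogonal to the constraint $Y_T\in\mathcal{K}$.
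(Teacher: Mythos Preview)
Your proposal is correct and follows essentially the same approach as the paper's own proof: one reruns the arguments of Theorems \ref{main1} and \ref{main2}, observing that convexity of $\mathcal{K}$ keeps $Y^n_T$, $\tilde{Y}^n_T$, $\overline{Y}^n_T$ in $\mathcal{K}$, and closedness in probability then forces $Y^{\ddagger}_T\in\mathcal{K}$, so $Y^{\dagger}=Y^{\ddagger}\in\mathcal{A}_U'$. Your write-up is simply more explicit about the individual steps than the paper's two-line proof.
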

\begin{proof}
We can verbatim follow the respective proofs noting 
that $Y^n$, $\tilde{Y}^n$
all stay in $\mathcal{S}'$, by Assumption \ref{k1}. Hence
the limit $Y^{\ddagger}$ is such that $Y^{\dagger}_T=Y^{\ddagger}_T\in\mathcal{K}$, again by
Assumption \ref{k1}.
\end{proof}

\begin{assumption}\label{k11}
The set $\mathcal{K}\subset L^0$ is convex and closed in probability, 
satisfying $\mathcal{K}+L_+^0\subset\mathcal{K}$.
\end{assumption}

\begin{remark}{\rm The sets 
$$
\{X\in L^0:\ El(X_-)\leq K\}\mbox{ and }
\{X\in L^0:\ X\geq X^{\flat}\}
$$
from Example \ref{oppo} both satisfy Assumption \ref{k11}.}
\end{remark}

\begin{theorem}\label{constt} Under Assumption \ref{k11},
Theorems \ref{main1.5} and \ref{main2.5} hold with $\mathcal{I}$ (resp. $\mathcal{I}_U$) replaced by 
$$
\mathcal{I}':=\{Y\in\mathcal{I}:\, Y_T\in\mathcal{K}\mbox{ a.s.}\}\ \mbox{(resp. }
\mathcal{I}_U':=\{Y\in\mathcal{I}_U:\, Y_T\in\mathcal{K}\mbox{ a.s.}\}\mbox{)},
$$
in their statements.
\end{theorem}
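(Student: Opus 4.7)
The plan is to deduce Theorem \ref{constt} directly from Theorems \ref{main1.5} and \ref{main2.5} by showing that the constrained class $\mathcal{I}'$ inherits both convexity and Fatou-closedness from $\mathcal{I}$. Assumption \ref{k11} is tailor-made so that the Fatou closure property survives intersection with the acceptability constraint. Once this verification is done, the first assertion follows by applying Theorem \ref{main1.5} to $\mathcal{I}'$ in place of $\mathcal{I}$, and the second assertion follows from Theorem \ref{main2.5} applied to $\mathcal{I}'$ after noting the identity
\[
\mathcal{I}_U' = \{Y\in\mathcal{I}':\ EU(Y_T+\mathcal{E})>-\infty\},
\]
so that $\mathcal{I}_U'$ is exactly the set that Theorem \ref{main2.5} produces out of $\mathcal{I}'$.

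Convexity of $\mathcal{I}'$ is immediate from convexity of $\mathcal{I}$, linearity of $Y\mapsto Y_T$, and convexity of $\mathcal{K}$. The genuine content is therefore Fatou-closedness of $\mathcal{I}'$. Given a sequence $Y^n\in\mathcal{I}'$ with $\inf_{n,t}Y^n_t\in\bigcap_{R\in\mathcal{M}^a_V}L^1(R)$ and $Y^n_T\to Z$ almost surely, I would first invoke the Fatou-closedness of $\mathcal{I}$ to obtain some $Y\in\mathcal{I}$ with $Y_T\geq Z$ almost surely. Since almost sure convergence entails convergence in probability on a probability space and $\mathcal{K}$ is closed in probability, the limit $Z$ lies in $\mathcal{K}$. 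Then writing $Y_T=Z+(Y_T-Z)$ with $Y_T-Z\geq 0$ and invoking $\mathcal{K}+L^0_+\subset\mathcal{K}$ from Assumption \ref{k11} gives $Y_T\in\mathcal{K}$, hence $Y\in\mathcal{I}'$.

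The main obstacle I anticipate is exactly this last upgrade. Under the weaker Assumption \ref{k1} alone, one could only conclude $Z\in\mathcal{K}$, but the Fatou majorant $Y_T\geq Z$ handed down by the $\mathcal{I}$-closure procedure need not itself lie in $\mathcal{K}$; the monotonicity $\mathcal{K}+L^0_+\subset\mathcal{K}$ is precisely what bridges this gap, which explains why the stronger Assumption \ref{k11} is the right hypothesis for the $\mathcal{I}$-version of the result (in contrast, Theorem \ref{const} can work with Assumption \ref{k1} because there one approximates by elements of $\mathcal{A}$ whose terminal values lie in $\mathcal{K}$ by construction, and the limit equals rather than dominates the a.s. limit). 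With convexity and Fatou-closedness of $\mathcal{I}'$ established, the rest is a direct appeal to Theorems \ref{main1.5} and \ref{main2.5}, producing the desired optimizers in $\mathcal{I}'$ and $\mathcal{I}_U'$ respectively.
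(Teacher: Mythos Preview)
Your argument is correct. Rather than re-running the proofs of Theorems \ref{main1.5} and \ref{main2.5} as the paper does (tracking that the approximants $Y^n,\tilde{Y}^n,\overline{Y}^n$ all have terminal values in $\mathcal{K}$, so that $Y^{\ddagger}_T\in\mathcal{K}$, and then using $\mathcal{K}+L^0_+\subset\mathcal{K}$ to conclude $Y^{\dagger}_T\in\mathcal{K}$), you package the same ingredients into the single observation that $\mathcal{I}'$ is itself convex and Fatou-closed, and then invoke the theorems as black boxes. The underlying mechanism is identical in both arguments---closedness of $\mathcal{K}$ in probability places the a.s.\ limit in $\mathcal{K}$, and the monotonicity $\mathcal{K}+L^0_+\subset\mathcal{K}$ upgrades this to the Fatou majorant---but your formulation is more modular: it isolates exactly why Assumption \ref{k11} (rather than the weaker Assumption \ref{k1}) is required here, and it avoids reopening the internal machinery of the earlier proofs. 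Your identification $\mathcal{I}_U'=(\mathcal{I}')_U$ for the second assertion is also the right way to connect to Theorem \ref{main2.5}.
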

\begin{proof}
As in the previous proof, $Y^{\ddagger}_T\in\mathcal{K}$ a.s. hence 
$Y^{\dagger}_T\in \mathcal{K}+L_+^0\subset \mathcal{K}$ a.s.
\end{proof}

It seems problematic even to formulate the dual problem with general constraint set $\mathcal{K}$. 
Hence we doubt that Theorems \ref{const} and \ref{constt} could be shown by solving the dual problem
first and then returning to the primal problem. Our method, however, 
operates only
on the primal problem and it applies easily to the case with constraints as well.

\section{Frictionless markets}\label{fric}

Let $S_t$, $t\in [0,T]$ be an $\mathbb{R}^d$-valued semimartingale
on the given stochastic basis; $L(S)$ denotes the corresponding class of $S$-integrable processes.
When $H\in L(S)$, we use the notation $H\cdot S_u$, to denote the value of the stochastic integral of
$H$ with respect to $S$ on $[0,u]$, $0\leq u\leq T$. The process $S$ represents the price of $d$
risky securities, $H$ plays the role of an investment strategy and $H\cdot S_u$ is the value
of the corresponding portfolio at time $u$ (we assume that there is a riskless asset with
price constant one and that trading is self-financing).

We denote by $\mathcal{M}^a$ the set of $Q\ll P$ such that $S$ is a $Q$-local martingale.
Set $\mathcal{M}^e:=\{Q\in\mathcal{M}^a:\ Q\sim P\}$.
The process $S$ is not assumed to be locally bounded but, for reasons of simplicity, we
refrain from exploring the universe of sigma-martingales in this paper.
For this section, we make the choice
\[
\mathcal{M}^a_V:=\mathcal{M}^a\cap\mathcal{P}_V,
\]
see \eqref{mathp}. Set also $\mathcal{M}^e_V:=\mathcal{M}^e\cap\mathcal{P}_V$.
We recall an important closure property for stochastic integrals.

\begin{lemma}\label{clos} Let $Q\in\mathcal{M}^e$ and let $w_t\geq 1$, $t\in [0,T]$ be a $Q$-martingale. 
If $H^n\in L(S)$,
$n\geq 1$ is a sequence such that $H^n\cdot S_T\to X$ $P$-almost surely (which is the same as
$Q$-almost surely) for some $X\in L^0$ and 
\begin{equation}\label{matra}
H^n\cdot S_t\geq -w_t, 
\end{equation}
$P$-almost surely for all $n\geq 1$, $t\in [0,T]$ then there is $H\in L(S)$ and $N\in L^0_+$
such that $X=H\cdot S_T-N$.
\end{lemma}
\begin{proof}
When \eqref{matra} holds with a fixed $Q$-integrable random variable $w$ instead of $w_t$ then this result
is just a reformulation of Corollary 15.4.11 from \cite{ds}. One can check that the proof of that result
goes through with minor modifications under the conditions stated in the present lemma, too.
\end{proof}

\begin{remark}\label{burt} {\rm In the setting of frictionless markets, we now compare our Assumption \ref{m0} to 
those of \cite{btz} and \cite{oz}.
In \cite{btz} $U$ was not assumed either smooth or strictly concave but $\mathcal{E}$ had to be
bounded. For unbounded random endowments \cite{oz} seems to present the state-of-the-art as far
as the existence of optimal portfolio strategies is concerned.
That paper assumes continuous differentiability and strict concavity of $U$. On $\mathcal{E}$
they stipulate their Assumption 1.6 which reads as 
\begin{equation}\label{zi}
x'+H'\cdot S_T\leq \mathcal{E}\leq x''+H''\cdot S_T,
\end{equation}
with $x',x''\in\mathbb{R}$ and with $H',H''\in L(S)$ such that $H'\cdot S$ is a martingale and $H''\cdot S$ is a supermartingale, under each 
$R\in\mathcal{M}^a_V$.

Assumption \ref{m0} allows certain important cases of $\mathcal{E}$ which are excluded by \eqref{zi}:
we only require $E_R|\mathcal{E}|<\infty$ for all $R\in\mathcal{M}^a_V$ while \eqref{zi} implies 
$\sup_{R\in\mathcal{M}^a_V}E_R |\mathcal{E}|<\infty$, see Example \ref{unbi} for more on this.}
\end{remark} 

\begin{example}\label{unbi} {\rm Let the filtration be generated by two independent Brownian motions
$W_t$ and $B_t$, $t\in [0,T]$ and let the price of the single risky asset be given by $S_t:=W_t+t$
(we could take a drift other that $t$, we chose this one for simplicity).
Define the random endowment $\mathcal{E}:=B_T$. Define $U(x)=-x^2$ for $x\leq 0$ and $U(x)=0$, $x>0$.
Choose, for $n\geq 1$, $Q_n$ as the unique element of $\mathcal{M}^e$ such that
$B_t-nt$, $t\in [0,T]$ is a $Q_n$-Brownian motion and $Q_n\sim P$. It is easily checked that $Q_n\in\mathcal{M}^e_V$.
We trivially have \eqref{rom2}
but $E_{Q_n}\mathcal{E}=nT\to\infty$ as $n\to\infty$
so \eqref{zi} cannot hold. We can thus find optimizers using Theorem 
\ref{mai1} below even in cases where $\mathcal{E}$ constitutes a non-hedgeable risk in the sense that 
there are no $H'$, $H''$
satisfying \eqref{zi}. An analogous argument applies to a larger family of random endowments: e.g. the same can be concluded about 
$\mathcal{E}:=f(S_T)B_T$ for an arbitrary
bounded measurable $f$ such that $E_{Q_n}f(S_T)\neq 0$ (note that this expectation is
independent of $n$).}
\end{example}

Define
\[
\mathbb{S}:=\{H\in L(S):\ H\cdot S\in\mathcal{S}\},
\]
where $\mathcal{S}$ is as in \eqref{chante}.

\begin{theorem}\label{mai1}
Let Assumptions \ref{u} and \ref{m0} be in force and let $U$ be bounded above.
Then there exists $H^{\dagger}\in\mathbb{S}$ such that
\begin{equation}\label{passengers}
EU(H^{\dagger}\cdot S_T+\mathcal{E})=\sup_{H\in\mathbb{S}}EU(H\cdot S_T+\mathcal{E}).
\end{equation}
\end{theorem}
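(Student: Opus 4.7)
The plan is to invoke Theorem \ref{main1.5} with the domain of optimisation
\[
\mathcal{I} := \{H \cdot S : H \in \mathbb{S}\}.
\]
By the very definition of $\mathbb{S}$, $\mathcal{I} \subset \mathcal{S}$. Linearity of the stochastic integral, together with the fact that convex combinations of $R$-supermartingales remain $R$-supermartingales, shows that $\mathcal{I}$ is convex; and $\mathcal{I}$ is non-empty because the zero integrand lies in $\mathbb{S}$. Granted Fatou-closedness of $\mathcal{I}$, Theorem \ref{main1.5} furnishes $Y^{\dagger} \in \mathcal{I}$, say $Y^{\dagger} = H^{\dagger} \cdot S$ with $H^{\dagger} \in \mathbb{S}$, maximising $EU(Y_T + \mathcal{E})$ over $Y \in \mathcal{I}$; this is precisely \eqref{passengers}. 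Note that Assumption \ref{moder} is \emph{not} assumed in Theorem \ref{mai1}, which forces us to route the argument through Theorem \ref{main1.5} rather than through Theorem \ref{main1}.

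The heart of the proof is therefore the Fatou-closedness of $\mathcal{I}$, and here Lemma \ref{clos} is designed for the job. Suppose $Y^n = H^n \cdot S \in \mathcal{I}$ satisfies $-\inf_{n,t} Y^n_t \in \bigcap_{R \in \mathcal{M}^a_V} L^1(R)$ and $Y^n_T \to Z$ almost surely. Set $w := 1 - \inf_{n,t} Y^n_t \geq 1$; then $H^n \cdot S_t \geq -w$ for all $n$ and $t$, while $w$ is $R$-integrable for each $R \in \mathcal{M}^a_V$, in particular for the reference measure $Q \in \mathcal{M}^e_V$ provided by Assumption \ref{m0}. Lemma \ref{clos} then delivers $H \in L(S)$ and $N \in L^0_+$ with $H \cdot S_T = Z + N \geq Z$. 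The standard Fatou-limit construction of the limiting integrand that underlies Corollary 15.4.11 of \cite{ds} also preserves the common lower bound, so that $H \cdot S_t \geq -w$ for every $t \in [0,T]$. For each $R \in \mathcal{M}^a_V$ the process $H \cdot S$ is then a local $R$-martingale bounded below by the $R$-integrable random variable $-w$; a classical Fatou-on-localising-times argument upgrades it to a true $R$-supermartingale. Hence $H \cdot S \in \mathcal{S}$, i.e. $H \in \mathbb{S}$, with $H \cdot S_T \geq Z$, which is exactly what Fatou-closedness demands.

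The main obstacle I anticipate is purely technical: Lemma \ref{clos} as stated records only the terminal identity $H \cdot S_T = Z + N$ and not the pointwise bound $H \cdot S_t \geq -w$, yet it is this latter bound that promotes $H \cdot S$ to a supermartingale under \emph{every} $R \in \mathcal{M}^a_V$ and not merely under the distinguished $Q$ used to apply the lemma. Fortunately, all standard proofs of the cited closure result construct $H$ as a Fatou limit of convex combinations of the $H^n$, and this construction automatically transmits the common lower bound $-w$ to $H \cdot S$ throughout $[0,T]$. With that ingredient in place, the appeal to Theorem \ref{main1.5} concludes the proof.
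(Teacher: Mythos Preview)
Your proof is correct and follows exactly the route the paper takes: apply Theorem~\ref{main1.5} to $\mathcal{I}=\{H\cdot S:H\in\mathbb{S}\}$, with Fatou-closedness supplied by Lemma~\ref{clos}. Your careful discussion of why the limiting integrand $H$ actually lands in $\mathbb{S}$ (via the preserved lower bound $-w$ and the resulting supermartingale property under every $R\in\mathcal{M}^a_V$) spells out a detail that the paper's two-line proof leaves implicit.
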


\begin{remark} {\rm Optimization problems like \eqref{passengers} arise in the study of indifference pricing and indifference hedging, see e.g. \cite{bfg}. 
We note that in \cite{oz} the domain of optimization was also $\mathbb{S}$. 

Notice that in \cite{oz} $S$ is assumed locally bounded while we do not need this hypothesis.
In \cite{oz} it was shown that  
\begin{equation}\label{bab}
\sup_{H\in\mathbb{S}} EU(H\cdot S_T+\mathcal{E})=\sup_{H\in\mathbb{A}^{\mathrm{adm}}}EU(H\cdot S_T+\mathcal{E}),
\end{equation}
where $\mathbb{A}^{\mathrm{adm}}$ is the set of portfolio strategies $H$ for which $H\cdot S$ is bounded
from below by a constant. In our setting, $S$ may fail to be locally bounded hence \eqref{bab} is clearly
false in general.}
\end{remark}

\begin{proof}[Proof of Theorem \ref{mai1}.] Set 
$$\mathcal{I}:=\{H\cdot S_T:\ H\in\mathbb{S}\},$$ 
this is convex and it is also Fatou-closed by Lemma \ref{clos}. Theorem \ref{main1.5} now implies the result.
\end{proof}

\begin{example}\label{lil} {\rm We call $\ell:\mathbb{R}_+\to\mathbb{R}_+$ a \emph{nice loss function} if it is a
Young function such that its conjugate $\ell^*$ is a Young function 
of class $\Delta_2$.  Typical nice loss functions are 
$\ell(x)=x^{\kappa}$ for some $\kappa>1$ 
(as their conjugate is also constant times a power function).

Minimizing the expected loss of a portfolio consists in finding $H^{\dagger}$ with
\begin{equation}\label{op}
E\ell([H^{\dagger}\cdot S_T +\mathcal{E}]_-)=\inf_{H\in\mathbb{S}}E\ell([H\cdot S_T +\mathcal{E}]_-).
\end{equation}
Theorem \ref{mai1}
applies here with the choice $U(x):=0$, $x>0$, $U(x):=-\ell(-x)$, $x\leq 0$, under Assumption \ref{m0}.

The existence of an optimal portfolio in general incomplete semimartingale models has already been considered
for such loss functions in the literature, see e.g. \cite{fl} and \cite{ph}. However, in these articles only portfolios with
a non-negative value process were admitted. Without this restriction, \cite{sara,sara_orlicz} cover the case
$\mathcal{E}=0$ and results of \cite{btz} apply when $\mathcal{E}$ is bounded. Our paper seems to be the first
to treat an unbounded random endowment in the context of loss minimization for value processes that are possibly not
bounded from below.}
\end{example}

\begin{theorem}\label{mai2} Let Assumptions \ref{u} and \ref{m2} be in force.
Define 
$$
\mathbb{S}_U:=\{H\in L(S):\ H\cdot S\in\mathcal{S},\ EU(H\cdot S_T+\mathcal{E})>-\infty\}.
$$
If $\mathbb{S}_U\neq\emptyset$ then
there exists $H^{\dagger}\in\mathbb{S}_U$ such that
\[
EU(H^{\dagger}\cdot S_T+\mathcal{E})=\sup_{H\in\mathbb{S}_U}EU(H\cdot S_T+\mathcal{E}).\quad\Box
\]
\end{theorem}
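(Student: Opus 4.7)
The plan is to mirror the proof of Theorem \ref{mai1}, invoking Theorem \ref{main2.5} in place of Theorem \ref{main1.5} because under Assumption \ref{m2} we no longer need $U$ bounded above (and Assumption \ref{moder} is not imposed here). Concretely, I would set
\[
\mathcal{I}:=\{H\cdot S:\ H\in\mathbb{S}\}\subset\mathcal{S},
\]
which is convex by linearity of the stochastic integral. With this choice, the set $\mathcal{I}_U$ appearing in Theorem \ref{main2.5} coincides with $\{H\cdot S:\ H\in\mathbb{S}_U\}$, so the hypothesis $\mathbb{S}_U\neq\emptyset$ gives $\mathcal{I}_U\neq\emptyset$, and any maximizer $Y^{\dagger}\in\mathcal{I}_U$ delivered by Theorem \ref{main2.5} will be of the form $H^{\dagger}\cdot S$ with $H^{\dagger}\in\mathbb{S}_U$, which is the sought optimizer.

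The substantive step is to verify that $\mathcal{I}$ is Fatou-closed. Given $H^n\in\mathbb{S}$ with $\inf_{n,t}H^n\cdot S_t\in\bigcap_{R\in\mathcal{M}^a_V}L^1(R)$ and $H^n\cdot S_T\to Z$ almost surely, I would set
\[
w:=1-\inf_{n,t}H^n\cdot S_t,
\]
so $w\geq 1$, $H^n\cdot S_t\geq -w$ for all $n,t$, and $w\in L^1(R)$ for every $R\in\mathcal{M}^a_V$. Choosing any $Q\in\mathcal{M}^e_V$ (available by Assumption \ref{m2}) and invoking Lemma \ref{clos}, I obtain $H\in L(S)$ and $N\in L^0_+$ with $Z=H\cdot S_T-N$, so $H\cdot S_T\geq Z$; the closure result moreover transfers the lower bound, giving $H\cdot S_t\geq -w$ for every $t\in[0,T]$. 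Then for each $R\in\mathcal{M}^a_V$ the process $H\cdot S$ is an $R$-local martingale (since $S$ is) bounded below by $-w\in L^1(R)$, hence an $R$-supermartingale by Fatou's lemma; thus $H\cdot S\in\mathcal{S}$ and $H\in\mathbb{S}$, establishing Fatou-closedness and producing a witness $Y:=H\cdot S\in\mathcal{I}$ with $Y_T\geq Z$.

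The main obstacle I anticipate is exactly this last verification: Lemma \ref{clos} is formulated relative to a single $Q\in\mathcal{M}^e$, whereas membership in $\mathcal{S}$ demands the supermartingale property under \emph{every} reference probability $R\in\mathcal{M}^a_V$. The resolution hinges on the uniform lower bound $-w$ being integrable under every such $R$, which is built into the definition of Fatou-closedness through the hypothesis $\inf_{n,t}Y^n_t\in\bigcap_{R\in\mathcal{M}^a_V}L^1(R)$; once this is in place, the standard ``local martingale bounded below by an integrable random variable is a supermartingale'' argument applies uniformly in $R$, and Theorem \ref{main2.5} then delivers the conclusion directly.
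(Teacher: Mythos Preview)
Your proposal is correct and follows exactly the paper's route: the paper's own proof is the single line ``This follows from Theorem \ref{main2.5}'', and you are simply supplying the verification of Fatou-closedness of $\mathcal{I}=\{H\cdot S:\ H\in\mathbb{S}\}$ via Lemma \ref{clos}, which the paper already recorded in the proof of Theorem \ref{mai1}. The one point worth flagging is that the transfer of the uniform lower bound $H\cdot S_t\geq -w$ to the limiting integrand is not part of Lemma \ref{clos} as stated in the paper, but it \emph{is} part of the underlying Corollary 15.4.11 of \cite{ds} that the lemma invokes, so your appeal to it (and the ensuing Ansel--Stricker/Fatou argument giving the $R$-supermartingale property for every $R\in\mathcal{M}^a_V$) is legitimate.
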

\begin{proof}
This follows from Theorem \ref{main2.5}. 
\end{proof}

We can also obtain the following result.

\begin{theorem}\label{conss} Under Assumption \ref{k11}, Theorems \ref{mai1} and \ref{mai2} hold when
$\mathbb{S}$ (resp. $\mathbb{S}_U$) is replaced by 
\[
\mathbb{S}':=\{H\in\mathbb{S}:\ H\cdot S_T\in\mathcal{K}\}\ \mbox{(resp. }
\mathbb{S}_U':=\{H\in\mathbb{S}_U:\ H\cdot S_T\in\mathcal{K}\}\mbox{)}, 
\]
in their statements.
\end{theorem}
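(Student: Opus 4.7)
The plan is to reduce Theorem \ref{conss} to Theorem \ref{constt} in exactly the same way that Theorem \ref{mai1} was reduced to Theorem \ref{main1.5} (and Theorem \ref{mai2} to Theorem \ref{main2.5}) in the unconstrained setting. For the first assertion I would set $\mathcal{I}:=\{H\cdot S:\,H\in\mathbb{S}\}$, viewed as a convex subset of $\mathcal{S}$; for the second, $\mathcal{I}_U:=\{H\cdot S:\,H\in\mathbb{S}_U\}$. The proof of Theorem \ref{mai1} already verified, via Lemma \ref{clos}, that $\mathcal{I}$ is Fatou-closed, and the same argument applies to $\mathcal{I}_U$. Theorem \ref{constt} then delivers an optimizer $Y^{\dagger}$ in $\mathcal{I}':=\{Y\in\mathcal{I}:\,Y_T\in\mathcal{K}\text{ a.s.}\}$ (resp.\ $\mathcal{I}_U'$). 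Since every element of $\mathcal{I}'$ has the form $H\cdot S$ with $H\in\mathbb{S}$ and $H\cdot S_T\in\mathcal{K}$, that is $H\in\mathbb{S}'$, writing $Y^{\dagger}=H^{\dagger}\cdot S$ gives the desired $H^{\dagger}\in\mathbb{S}'$ (and similarly in the $\mathbb{S}_U'$-case).

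The only substantive check is the Fatou-closedness of $\mathcal{I}$, which I would redo briefly to make sure the constraint is not silently needed: given $Y^n=H^n\cdot S\in\mathcal{I}$ with $\inf_{n,t}Y^n_t\in\bigcap_{R\in\mathcal{M}^a_V}L^1(R)$ and $Y^n_T\to Z$ a.s., Lemma \ref{clos} (applied with any $Q\in\mathcal{M}^e_V$ supplied by Assumption \ref{m0}, using the common lower bound $-w$) produces $H\in L(S)$ and $N\in L^0_+$ with $H\cdot S_T=Z+N\geq Z$; the lower bound $-w$ transfers to $H\cdot S$, so $H\cdot S$ is an $R$-local martingale bounded below by an $R$-integrable random variable for each $R\in\mathcal{M}^a_V$, hence an $R$-supermartingale, placing $H\cdot S$ in $\mathcal{I}$. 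The argument for $\mathcal{I}_U$ is identical. Convexity of both classes is immediate from linearity of the stochastic integral.

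The main pressure point, and the reason Assumption \ref{k11} rather than the weaker Assumption \ref{k1} is invoked here, is the mismatch between the conclusion of Lemma \ref{clos} ($H\cdot S_T=Z+N$, an inequality) and the pointwise constraint $H\cdot S_T\in\mathcal{K}$: one only gets $Z\in\mathcal{K}$ from closedness in probability of $\mathcal{K}$ together with $Y^n_T\to Z$ a.s., and must absorb the extra $N\geq 0$ using $\mathcal{K}+L^0_+\subset\mathcal{K}$. This is precisely the step executed inside the proof of Theorem \ref{constt}, so once the Fatou-closedness above is in hand, everything else is bookkeeping and the constrained analogs of Theorems \ref{mai1} and \ref{mai2} follow with no further work.
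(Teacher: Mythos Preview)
Your proposal is correct and follows exactly the paper's approach: the paper's proof is the single line ``This follows from Theorem \ref{constt},'' and you have simply unpacked that reduction with the appropriate identification $\mathcal{I}=\{H\cdot S:H\in\mathbb{S}\}$ and the Fatou-closedness check via Lemma \ref{clos}. One small remark: for the $\mathbb{S}_U'$ case you need only Fatou-closedness of $\mathcal{I}$, not of $\mathcal{I}_U$ (Theorem \ref{main2.5} and hence Theorem \ref{constt} derive $\mathcal{I}_U$ from a Fatou-closed $\mathcal{I}$), so your separate verification for $\mathcal{I}_U$ is harmless but unnecessary.
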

\begin{proof}
This follows from Theorem \ref{constt}.\end{proof}

\begin{remark}\label{cerny}
{\rm In the extensive related literature, perhaps the approach of \cite{sara} is the closest to ours in spirit. 
In that paper the focus is on working with a pleasant class of admissible strategies while we stay within the 
``standard'' class of \cite{oz}. At the purely technical level, the main difference is that in \cite{sara}
the dual problem is formulated, the dual minimizer is found and then it plays an
important role in the construction of the primal optimizer. In our paper, thanks to
the results of \cite{delbaen-owari}, we avoid introducing the dual problem altogether.
This is advantageous since, quite often, the dual problem is difficult to analyse (as in the
case of random endowments when the space of finitely additive measures needs to be used, 
see \cite{oz}) or even hopeless to properly formulate (as in the case of constraints, see 
Theorem \ref{conss} above).}
\end{remark}

\section{Large markets}\label{lar}

The methods of the present paper are also applicable to markets with frictions, even
in the presence of model ambiguity, see \cite{robust}. Here we present another application, to models with infinitely many assets.

Large financial markets were introduced in \cite{kabanov-kramkov} as a sequence of market models with a
finitely many assets. For a review of the related literature
we refer to \cite{josef,callum}. In the present paper we only treat the case where all the countably many
assets are defined on the same probability space. 

Staying in the setting of Section \ref{ut},
let $S^j_t$, $j\geq 1$, $t\in [0,T]$ be a sequence of $\mathbb{R}$-valued semimartingales on the given stochastic basis
$(\Omega,\mathcal{F},(\mathcal{F}_t)_{t\in [0,T]},P)$. We denote by $\mathcal{M}^a$ the set of $Q\ll P$
such that $S^j$ is a $Q$-martingale for each $j\geq 1$. Let $\mathcal{M}^e:=\{Q\sim P:\ Q\in\mathcal{M}^a\}$,
$\mathcal{M}^a_V:=\mathcal{M}^a\cap\mathcal{P}_V$ and $\mathcal{M}^e_V:=\mathcal{M}^e\cap\mathcal{P}_V$.

\begin{remark} {\rm It is shown in \cite{josef} that $\mathcal{M}^e\neq\emptyset$ can be characterized 
by the absence of free lunches with vanishing risk. Hence $\mathcal{M}_V^e\neq\emptyset$ is a
``strengthened'' no-arbitrage assumption, taking into account the given investor's preferences via $V$,
the conjugate of the utility function $U$.}
\end{remark}

Define the $\mathbb{R}^m$-valued semimartingale
$F^m_t:=(S^1_t,\ldots,S^m_t)$ and set
\[
\mathbb{A}^m:=\{H\in L(F^m):\ H\cdot F^m_t\geq -s\mbox{ for all }t\in [0,T]\mbox{ with some }s>0\},\ m\geq 1,
\] 
where $L(F^m)$ denotes the set of $F^m$-integrable processes. It is implicitly assumed that
there is a riskless asset of price constant $1$ and that trading is self-financing, hence $H\cdot F^m$
is the value process of a portfolio in the risky assets $S^1,\ldots,S^m$ corresponding to the strategy $H$,
starting from zero initial capital.

It is natural to take 
$$
\mathcal{A}:=\cup_{m\geq 1}\{H\cdot F^m:\ H\in\mathbb{A}^m\}
$$
but $\mathcal{A}$ can't serve as a domain of optimization since
$\{Y_T:\, Y\in\mathcal{A}\}$ is not closed in any reasonable topology.
Following the papers \cite{paolo,mostovyi}, we resort to generalized strategies. The novelty
is that \cite{paolo,mostovyi} consider utilities on the positive real axis while we are
able to treat utilities $U:\mathbb{R}\to\mathbb{R}$, for the first time in the related
literature.

Recall the definition of $\mathcal{S}$ from \eqref{chante} and note that $\mathcal{A}\subset\mathcal{S}$ by \cite{as}. Let us recall the definition of $\mathcal{A}_U$ from Section \ref{ut}:
\begin{eqnarray*}
\mathcal{A}_U  &:=& \{Y\in\mathcal{S}:\mbox{ there is }Y^n\in\mathcal{A}\mbox{ with }U(Y^n_T+\mathcal{E})\in L^1,\ n\geq 1\\
& & \mbox{ and }U(Y^n_T+\mathcal{E})\to U(Y_T+\mathcal{E})\mbox{ in }L^1\}.\nonumber 
\end{eqnarray*}

Identifying portfolios with their value processes, we call elements of $\mathcal{A}_U$ \emph{generalized portfolio strategies}. With this choice of $\mathcal{A}$,  
Theorems \ref{main1}, \ref{main2} and \ref{const} prove the existence of optimizers in the
class of generalized strategies for a large financial market. 

\begin{remark}\label{unco}
{\rm In the present setting, it is crucial from the point of view of economic interpretations 
that the optimizer can be approximated by portfolios in finitely many assets, i.e. 
the optimizer lies in $\mathcal{A}_U$. That's why we apply Theorems \ref{main1}, \ref{main2} and \ref{const} and not
Theorems \ref{main1.5}, \ref{main2.5} or \ref{constt} 
where the class $\mathcal{I}$, a
priori, does not have any feature of 
``approximability by admissible strategies''. The price we pay is that Assumption \ref{moder} needs
to be posited which is a restriction on the tail of $U$ at $-\infty$.

Markets with uncountably many assets can also be treated in an analogous
manner, as easily seen. We confined ourselves to the countable case only, 
to stress connections with the extensively studied
area of large financial markets. 

Instead of $\mathbb{A}^m$, one could take portfolios whose value
processes are bounded
from below by constant times a weight function. This is a reasonable choice for 
price processes that are not locally bounded, see e.g. \cite{bifri} or Chapter 14 of \cite{ds}.}
\end{remark}

\end{document}